\documentclass{article}
\title{Generalized Alpha Investing: Definitions, Optimality Results, and
Application to Public Databases}

\author{Ehud Aharoni and Saharon Rosset\\IBM Research Laboratory in Haifa,
         Israel and Tel Aviv University, Israel\\
Email: aehud@il.ibm.com, saharon@post.tau.ac.il}

\usepackage{amsfonts}
\usepackage[dvips]{graphicx}
\usepackage{subfigure}
\usepackage{fullpage}
\newtheorem{theorem}{Theorem}[section]
\newtheorem{lemma}[theorem]{Lemma}
\newtheorem{proposition}[theorem]{Proposition}
\newtheorem{corollary}[theorem]{Corollary}
\newtheorem{assumption}[theorem]{Assumption}
\newtheorem{definition}[theorem]{Definition}

\newenvironment{proof}[1][Proof]{\begin{trivlist}
\item[\hskip \labelsep {\bfseries #1}]}{\end{trivlist}}
\newcommand{\qed}{\nobreak \ifvmode \relax \else
      \ifdim\lastskip<1.5em \hskip-\lastskip
      \hskip1.5em plus0em minus0.5em \fi \nobreak
      \vrule height0.75em width0.5em depth0.25em\fi}

\newcommand{\ccj}{\varphi_j}

\newcommand{\ddj}{\psi_j}

\newcommand{\ppj}{\rho_j}
\newcommand{\II}{{\cal I}}
\newcommand{\JJ}{{\cal J}}
\newcommand{\QPDAS}{QPD\mbox{-}AS}
\newcommand{\QPDASR}{QPD\mbox{-}ASR}
\newcommand{\QPDASROPT}{QPD\mbox{-}ASR\mbox{-}OPT}
\newcommand{\LN}{L:\mathbb{N}\to\mathbb{R}}
\newcommand{\ero}{ERO}

\newcommand{\citep}{\cite}

\begin{document}
\maketitle

\begin{abstract}
The increasing prevalence and utility of large, public databases
necessitates the development of appropriate methods for controlling
false discovery. Motivated by this challenge, we discuss the generic
problem of testing a possibly infinite stream of null hypotheses. In
this context, Foster and Stine (2008) suggested a novel method named Alpha Investing
for controlling a false discovery measure known as mFDR.
We develop a more general procedure for controlling mFDR, of which Alpha
Investing is a special case. We show that in common, practical situations,
the general procedure can be optimized to produce an expected reward optimal (ERO) version,
which is more powerful than Alpha Investing.

We then present the concept of quality preserving databases (QPD),
originally introduced in Aharoni et al. (2011), which formalizes
efficient public database management to simultaneously save costs and
control false discovery. We show how one variant of generalized alpha
investing can be used to control mFDR in a QPD and lead to significant
reduction in costs compared to na\"{\i}ve approaches for controlling the
Family-Wise Error Rate implemented in Aharoni et al. (2011).
\end{abstract}

\section{Introduction}

As the extent of available data and scientific questions being
addressed through hypothesis testing continues to increase,
statisticians are required to develop appropriate approaches that
control false discoveries in large scale testing  to assure statistical
validity of discoveries and publications, while facilitating efficient
use of the data and assuring maximal power to the studies.
For example, modern genome-wide association studies already perform on
the order of one million tests per study. Issues of power, sample size
and false discoveries in such
studies are recognized as a major concern~\citep{gwas_review}.

This situation has given rise to a vast and growing literature on
multiple comparison control and corrections, encompassing a variety
of methods for controlling a range of measures of false discovery.
Let $R$ denote the number of rejected null hypotheses (discoveries),
and $V$ the number of rejected null hypotheses which are true nulls (false discoveries).
The two most commonly used measures of false discovery are the
Family-Wise Error Rate (FWER) and False Discovery Rate (FDR), defined
as follows.
\begin{eqnarray}
\begin{array}{rcl}
FWER & \equiv & P(V>0)\\
FDR & \equiv & E(\frac{V}{R}|R>0)P(R>0).
\end{array}
\end{eqnarray}
FWER is the probability
of making one or more false discoveries, while FDR is the expected
percentage of false discoveries among the total number of discoveries.

Given a measure of false discovery, one needs to develop statistical testing approaches that guarantee control over this measure. The simplest approaches typically assume nothing about the statistical hypothesis testing setup and guarantee control in all possible scenarios. Such
is the Bonferroni correction for controlling FWER.
The FDR control procedure developed
by~\cite{fdr}
assumes the test statistics are independent or
that they have a positive regression dependency~\citep{false_discovery_dependency}.
 A modified, more
conservative  version of
this procedure covers all other forms of dependency~\citep{false_discovery_dependency}.

An interesting situation arises when hypotheses arrive sequentially
in a stream, and the testing procedure should determine whether to
accept or reject each of them without waiting for the end of stream.
Organizing hypotheses sequentially may be a choice based on scientific considerations~\citep{alphainvesting}, but it may also be an external constraint, whereby the testing of each hypothesis is performed as it arrives and a conclusion is taken.

Such a constraint occurs when attempting to apply procedures for
controlling false discovery on a public database. Such databases,
accessed or distributed through the Internet, are
used by research groups worldwide, and are becoming increasingly
pervasive. Examples include Stanford university's HIVdb~\citep{HIVdb}
which serves the community of anti-HIV treatment researches, WTCCC's
large-scale data for whole-genome association studies~\citep{WTCCC} which is
distributed to selected research groups,
and the NIH Influenza virus resource~\citep{YimingBao01152008}, often
used to identify complex dependencies in the virus genome.
 Some databases
are targeted for a specific type of hypotheses and contain built-in
facilities for executing tests,
e.g., ProMateus~\citep{promateus}, a collaborative web server for
automatic feature selection for protein binding-sites prediction.

The need to control
false discovery across multiple research groups without sacrificing
too much power is repeatedly acknowledged~\citep{bias,gene_combine,gene_false,ion}.
Since such a public database serves multiple researchers, working
 independently at different times, a procedure for controlling false
 discovery in this context must be sequential, without prior knowledge
 of even the number of hypotheses that is going to be tested.
Such a framework, termed "Quality Preserving Database" (QPD),
is discussed in~\cite{qpd2010}.
A QPD controls false discovery by prescribing the levels of the tests
performed by its users. To avoid power loss over time, the database
size is continually augmented, and the costs of obtaining additional
data are fairly distributed among database users based on their
 activity.
Thus a QPD is potentially handling an infinite series of tests.

The QPD implementation described in~\cite{qpd2010} uses a sequential
testing procedure known as Alpha Spending~\citep{alphainvesting},
which is closely related to Bonferroni  correction.
Alpha Spending uses an $\alpha$-wealth
function $W(j)$ that contains a pool of available level for testing. If the
$j$'th test is conducted at level $\alpha_j$, then this amount is
subtracted from the pool, $W(j)=W(j-1)-\alpha_j$. Initializing
$W(0)=\alpha$ and requiring $\forall j : W(j)\geq 0$, guarantees that
$\sum_j \alpha_j\leq \alpha$, and hence that $FWER$ is controlled at
level $\alpha$.

In the search for a sequential procedure for controlling $FDR$,
a modified measure has been employed, $mFDR_\eta$~\citep{alphainvesting}:
\begin{equation}
mFDR_\eta\equiv \frac{E(V)}{E(R)+\eta},
\end{equation}
where $\eta>0$ is some constant, typically chosen to be
$\eta=1-\alpha$.
A sequential procedure known as
Alpha Investing~\citep{alphainvesting} controls
$mFDR_\eta$ at level $\alpha$. Alpha Investing is similar to Alpha
Spending in its usage of an $\alpha$-wealth function $W(j)$, except in
Alpha Investing this function not only decreases, but sometimes
increases. Whenever a hypothesis is
rejected, a reward is added to $W(j)$, allowing higher levels and higher power for subsequent tests.

In this paper we
define a more general framework we term Generalized Alpha
Investing. This framework allows more control over the level of the
$j$'th test and the reward obtained if the hypothesis is rejected.
We show that there is a trade-off between the two,
i.e., if a higher level is chosen for the $j$'th test, then
the reward decreases, and vice versa. Therefore, our general framework
changes the meaning of the wealth $W(j)$ function. Instead of just
$\alpha$-wealth, $W(j)$ becomes an abstract potential function, and
the amounts withdrawn from it can be converted to various combinations
of level and reward, of which Alpha Investing and Alpha Spending are
special cases.

We define an optimality criterion which determines what is the optimal
usage of amounts withdrawn from $W(j)$, and show that reaching the
optimum is feasible in the case of uniformly most powerful tests with
continuous distribution functions, a single parameter,
and a simple null hypothesis.
We demonstrate in simulations that this optimized variant is significantly
more powerful than Alpha Investing for a range of realistic scenarios.

We also present another  special case of Generalized
Alpha Investing which is more easily combined with systems that rely
on Alpha Spending. We term this last variant Alpha Spending with Rewards.
We demonstrate how Alpha Spending with Rewards can
replace Alpha Spending in a QPD implementation.
This replaces $FWER$ with $mFDR_\eta$ as the
controlled measure, in exchange for reduced usage costs.
Note that Alpha Investing, as formulated in~\cite{alphainvesting},
cannot be combined with a QPD, while Alpha Spending with Rewards
smoothly integrates with it.
Thus our new implementation demonstrates both the advantage
of a QPD, as well as the
advantages of our Generalized Alpha Investing approach.

The rest of the paper is organized as follows.
Section~\ref{section_mfdr} provides a concise summary of the Alpha Investing
procedure and mFDR as described in~\cite{alphainvesting}.
Section~\ref{section_gai} describes our generalization of the Alpha
Investing procedure, its special cases and associated optimality results.
Section~\ref{section_qpd} formally defines the QPD
along the lines of~\cite{qpd2010} and demonstrates how it can be
combined with Alpha Spending with Rewards, to control $mFDR$ and
achieve the QPD goals while imposing lower costs on users.

\section{Alpha Investing and mFDR}
\label{section_mfdr}

\subsection{Notation and Assumptions}

Following a notation similar to~\cite{alphainvesting},
we consider the problem of testing $m$ null hypotheses, $H_1$,
$H_2$,\ldots,$H_m$.
A random variable $R_j\in\{0,1\}$  indicates whether $H_j$ was
rejected, and $R(m)=\sum_{j=1}^{j=m} R_j$ counts the total number of
rejections. Similarly, $V_j \in \{0,1\}$ indicates the case $H_j$ is true
  and is rejected (i.e., type-I error), and $V(m)=\sum_{j=1}^{j=m}V_j$
  tracks total number of type-I errors.

Let $\Theta_j$ be the parameter space assumed by the $j$'th test.
The null hypothesis $H_j$ is defined as a subset $H_j\subset \Theta_j$,
and the alternative hypothesis is $\Theta_j-H_j$. $\theta_j$ is the true
parameter value, thus $H_j$ is true if $\theta_j\in H_j$.
We denote by $\Theta$ and $\theta$
the combined parameter space and true parameter values of all $m$
tests, respectively. We use the notation $P_\theta(\cdot)$ and
$E_\theta(\cdot)$ to denote probability distribution and expectation
when assuming $\theta$.
We denote the level of the $j$'th test by $\alpha_j$.

In our analysis we require an additional, uncommon measure we
term ``Best Power''. Denoted by $\ppj$, we define it as
the maximal probability of
rejection for any parameter value in the alternative hypothesis. The following definition
formalizes it.

\begin{definition}
The {\em Best Power}, $\ppj$, of testing a null hypothesis $H_j$ is
\begin{equation}
\ppj=sup_{\theta_j \in \Theta_j - H_j} P_{\theta_j}(R_j=1).
\end{equation}
\label{def_best_power}
\end{definition}

In {\em Best Power} we measure the upper bound on the
power, which is often 1.
The simplest case in which $\ppj<1$, is when the alternative hypothesis is simple, e.g., in Neyman-Pearson type tests. In this case the best power is simply the power computed for the alternative hypothesis. Another case in which $\ppj<1$ is when the alternative hypothesis is limited to a finite range. For example, a situation where a z-test is performed to test the effect of a drug on some clinical measure (we assume the standard deviation of this measure is known from previous studies). Since clinical measures are confined to some finite, biologically plausible range, it follows that the alternative hypothesis must be limited in range.

Similarly to~\cite{alphainvesting}, the results in this paper
require the following assumption on the dependence between the tests
in the sequence.
\begin{assumption}
\begin{eqnarray}
\forall {\theta\in H_j} :
P_\theta(R_j|R_{j-1},R_{j-2},\ldots,R_1)\leq \alpha_j \\
\forall {\theta\notin H_j} :
P_\theta(R_j|R_{j-1},R_{j-2},\ldots,R_1)\leq \ppj.
\end{eqnarray}
\label{ass_ind}
\end{assumption}
Assumption~\ref{ass_ind} requires that the level and power traits of a
test hold even given the history of rejections.
This is somewhat weaker than requiring independence.

\subsection{Alpha Spending}

An Alpha Spending procedure is a simple procedure for multiple
hypothesis testing.
It maintains a pool of nonnegative $\alpha$-wealth $W(j)$, where $W(0)$ is the
initial pool and $W(j)$ is the remaining wealth after the $j$'th test.
The level allocated for the $j$'th test, $\alpha_j$, is subtracted
from the wealth, i.e., $W(j)=W(j-1)-\alpha_j$.

Since the wealth must
remain non negative, this procedure enforces the invariant
$\sum_{j=1}^m \alpha_j \leq W(0)$. Hence this procedure
guarantees a bound of $W(0)$ on the Family-Wise
Error Rate (FWER), the probability of falsely rejecting any null
hypothesis,
$\forall m: FWER(m)=P_\theta(V(m)>0)\leq W(0)$.

The major drawback of this and other Bonferroni based procedures is
that the FWER measure is often too conservative, hence aiming to
control it results in loss of power. In the following we discuss
less conservative measures and procedures for controlling them.

\subsection{FDR and mFDR}

In~\cite{fdr} the concept of False Discovery Rate is introduced, which may
be defined as follows:
$FDR(m)=P_\theta(R(m)>0)E_\theta(\frac{V(m)}{R(m)}|R(m)>0)$. Since it
is trivially true
that $FDR(m)\leq FWER(m)$, power gain is to be expected when aiming to
control $FDR$ instead of $FWER$. Furthermore, under the complete null
hypothesis assumption, i.e., when assuming all null hypotheses are true, it holds that
$FDR(m)=FWER(m)$. Controlling $FWER$ under this assumption is referred to as
controlling it in the weak sense.

Multiple variants of $FDR$ have been suggested in the
literature, such as $pFDR$ which drops the term
$P(R(m)>0)$~\citep{storey_fdr}, and various variants termed {\em Marginal False Discovery Rate} (mFDR),
based on the form $E(V(m))/E(R(m))$, with or without adding a constant to the denominator~\citep{fdr}.

We adopt the form and notation of~\cite{alphainvesting}, in defining $mFDR_\eta$ as follows.
\begin{equation}
mFDR_\eta(m)=\frac{E_\theta(V(m))}{E_\theta(R(m))+\eta}.
\end{equation}

Under the complete null hypothesis assumption, it is easy to show
that $mFDR_{1-\alpha}(m)\leq \alpha$ implies
$E_\theta(V(m))\leq \alpha$. Since
$FWER(m)\leq E_\theta(V(m))$, then
similar to the standard $FDR$, controlling $mFDR_\eta$ with $\eta$ set
  to $1-\alpha$ implies control over $FWER$ in the weak sense.

In the rest of this paper we only refer to control in the strong
sense, i.e., a procedure that controls $mFDR_\eta(m)$ at level
$\alpha$ will guarantee it is bounded by $\alpha$ for any value of the
unknown parameters $\theta\in\Theta$.

\subsection{Alpha Investing}
Alpha Investing~\citep{alphainvesting} is a procedure that
controls $mFDR_\eta$ at level $\alpha$ for any given choice of $\eta$ and $\alpha$.
It is defined as follows.

Similarly to Alpha Spending, the procedure uses a pool of
$\alpha$-wealth, $W(j)$. While in Alpha Spending we only subtract
amounts from this wealth until it is depleted, here we sometimes put
back wealth to the pool, hence the increased power.

A deterministic function $\II_{W(0)}$, known as the alpha investing rule,
determines the level of the next test as a function of the current
rejection history and the initial wealth.
\begin{equation}
\alpha_j=\II_{W(0)}(\{R_1,R_2,\ldots,R_{j-1}\}).
\label{equ_ai_rule}
\end{equation}

The $\alpha$-wealth is initialized and updated as follows.
\begin{eqnarray}
\begin{array}{rcl}
W(0) & = & \alpha\eta\\
W(j) & = & W(j-1)-(1-R_j)\frac{\alpha_j}{1-\alpha_j}+R_j \omega,
\end{array}
\label{equ_ai_update}
\end{eqnarray}
i.e., if $H_j$ is accepted, then the amount
$\frac{\alpha_j}{1-\alpha_j}$ is subtracted from the wealth. If it is
rejected, a reward $\omega$ is added.

The following definition summarizes the Alpha Investing Procedure
\begin{definition}
An {\em Alpha Investing Procedure} is a sequential hypothesis testing
procedure such that the level of the $j$'th test is determined
by Alpha Investing rule~\ref{equ_ai_rule}, the $\alpha$-wealth is
initialized and updated by equations~\ref{equ_ai_update}, $\forall j : W(j)\geq 0$
and $\omega\leq \alpha$.
\label{def_ai}
\end{definition}

Note that since $\omega$ plays no additional role in this system other
than the reward, it usually simply set to the maximal allowed value
$\omega=\alpha$.
In \cite{alphainvesting} it is proven that under
Assumption~\ref{ass_ind},
an Alpha Investing procedure indeed controls $mFDR_\eta$ at level
$\alpha$.

\section{Generalized Alpha Investing Procedure}
\label{section_gai}

\subsection{Definition}
We now present
a more general definition of Alpha-Investing-like procedures.
We show that the original Alpha Investing procedure is a special
case of it, as well as Alpha Spending. We
show further an additional special case which is better suited than
Alpha Investing for performing Neyman-Pearson tests.

Our generalization starts with a generalized investing rule $\JJ_{W(0)}$, which now
separately determines three quantities:
The level of the test, $\alpha_j$,
the amount $\ccj$ subtracted from the wealth,
and the reward $\ddj$ received upon rejection.
\begin{equation}
(\ccj,\ddj,\alpha_j)=\JJ_{W(0)}(\{R_1,R_2,\ldots,R_{j-1}\}).
\label{equ_gai_rule}
\end{equation}

We no longer refer to $W(j)$ as $\alpha$-wealth, since its
relation to the level of the test is quite indirect. We
refer to it more generally as a potential function. Its initialization
and update rule are now:
\begin{eqnarray}
\begin{array}{rcl}
W(0) & = & \alpha \eta \\
W(j) & = & W(j-1)-\ccj+R_j \ddj.
\end{array}
\label{equ_gai_update}
\end{eqnarray}

Note one subtle point: In the original Alpha Investing (equation~\ref{equ_ai_update}),
the amount $\frac{\alpha_j}{1-\alpha_j}$ is subtracted from the
wealth only if we fail to reject $H_j$. Here the amount $\ccj$ is
subtracted unconditionally. This change is
introduced to simplify the analysis below.

\begin{definition}
A {\em Generalized Alpha Investing Procedure} is a sequential hypothesis testing
procedure such that $\alpha_j$, $\ccj$ and $\ddj$ are determined
by rule~\ref{equ_gai_rule}, the potential function is
initialized and updated by equations~\ref{equ_gai_update},
$\forall j : W(j)\geq 0$, and the following inequality holds:
\begin{equation}
\forall j : 0\leq \ddj \leq
min(\frac{\ccj}{\ppj}+\alpha,\frac{\ccj}{\alpha_j}+\alpha-1),
\label{equ_gai_max_reward}
\end{equation}
where $\ppj$ is the {\em best power} of the $j$'th test.
\label{def_genai}
\end{definition}

Before analyzing this framework, let us demonstrate that the original
Alpha Investing procedure is a special case of it.
\begin{proposition}
Alpha Investing procedure is a special case of the Generalized Alpha
Investing procedure.
\end{proposition}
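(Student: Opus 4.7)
The plan is to exhibit an explicit mapping from an Alpha Investing procedure (as in Definition~\ref{def_ai}) to a Generalized Alpha Investing procedure (as in Definition~\ref{def_genai}) that induces the same sequence of tests, the same rejection decisions, and the same potential values $W(j)$, and then verify that the constraints of Definition~\ref{def_genai} are satisfied. The key observation is that the Alpha Investing update can be rewritten as
\begin{equation}
W(j)=W(j-1)-\frac{\alpha_j}{1-\alpha_j}+R_j\Bigl(\frac{\alpha_j}{1-\alpha_j}+\omega\Bigr),
\end{equation}
which has exactly the form of the generalized update~(\ref{equ_gai_update}) under the identification $\ccj=\alpha_j/(1-\alpha_j)$ and $\ddj=\alpha_j/(1-\alpha_j)+\omega$, while $\alpha_j$ is taken directly from the Alpha Investing rule~(\ref{equ_ai_rule}).

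With this choice, I would first check the trivial items: the initializations $W(0)=\alpha\eta$ coincide; the rule $\JJ_{W(0)}$ is well-defined as a deterministic function of the rejection history because $\II_{W(0)}$ is; the potential $W(j)$ stays nonnegative because the Alpha Investing procedure already guarantees this; and $\ddj\geq 0$ since $\alpha_j,\omega\geq 0$.

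The only non-obvious step is verifying inequality~(\ref{equ_gai_max_reward}), i.e.\
\[
\ddj \le \min\Bigl(\tfrac{\ccj}{\ppj}+\alpha,\ \tfrac{\ccj}{\alpha_j}+\alpha-1\Bigr).
\]
For the second bound, a short algebraic simplification gives $\ccj/\alpha_j+\alpha-1 = 1/(1-\alpha_j)+\alpha-1 = \alpha_j/(1-\alpha_j)+\alpha$, so the bound becomes $\omega\le \alpha$, which is part of Definition~\ref{def_ai}. For the first bound, I would rearrange $\ddj\le \ccj/\ppj+\alpha$ into
\[
\frac{\alpha_j}{1-\alpha_j}\Bigl(1-\tfrac{1}{\ppj}\Bigr)\le \alpha-\omega,
\]
and observe that the left-hand side is nonpositive (because $\ppj\le 1$ as it is a probability) while the right-hand side is nonnegative (because $\omega\le\alpha$), so the inequality holds automatically.

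I do not expect a real obstacle here; the subtlety the authors flagged between the conditional subtraction in~(\ref{equ_ai_update}) and the unconditional subtraction in~(\ref{equ_gai_update}) is precisely what forces $\ddj$ to absorb the extra term $\alpha_j/(1-\alpha_j)$ on top of $\omega$, and checking the $\min$ bound is then a one-line manipulation per term. The main thing to be careful about is not to confuse $\omega$ (the Alpha Investing reward, bounded by $\alpha$) with $\ddj$ (the generalized reward, which equals $\omega$ plus the write-back of the conditional withdrawal).
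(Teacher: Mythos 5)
Your proposal is correct and follows essentially the same route as the paper: the identical identification $\ccj=\alpha_j/(1-\alpha_j)$, $\ddj=\ccj+\omega$, recovery of the Alpha Investing update rule, and verification of constraint~(\ref{equ_gai_max_reward}) via $\omega\le\alpha$ and $\ppj\le 1$. You merely spell out the algebra behind the paper's one-line justification of the constraint, which the paper leaves implicit.
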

\begin{proof}
In the generalized framework,
let us choose $\ccj=\frac{\alpha_j}{1-\alpha_j}$ and
$\ddj=\ccj+\omega$, where $\omega$ is some constant such that $0\leq \omega\leq \alpha$.
Substituting in the update rule~\ref{equ_gai_update} we get
$W(j)  =  W(j-1)-\frac{\alpha_j}{1-\alpha_j}+R_j
     (\frac{\alpha_j}{1-\alpha_j}+\omega)
      =  W(j-1)-(1-R_j)\frac{\alpha_j}{1-\alpha_j}+R_j \omega$,
which is the Alpha Investing update rule.
It only remains to show that the constraint~\ref{equ_gai_max_reward}
is satisfied.
This holds since $\omega\leq \alpha$ and $\ppj\leq 1$.\qed
\end{proof}

We now proceed to show that the Generalized Alpha
Investing Procedure indeed controls $mFDR_\eta$. We follow the same
main ideas as the proof for alpha-investing in~\cite{alphainvesting}.

\begin{theorem}
Given Assumption~\ref{ass_ind}, a Generalized Alpha Investing Procedure
controls $mFDR_\eta$ at level $\alpha$.
\label{theorem_gai_controls_mfdr}
\end{theorem}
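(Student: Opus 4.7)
The plan is to run the same supermartingale argument used by Foster and Stine, but adapted to the richer update rule of the generalized procedure. Define the auxiliary process
\begin{equation*}
A(j) \;=\; W(j) + V(j) - \alpha R(j), \qquad A(0)=W(0)=\alpha\eta.
\end{equation*}
The target inequality $mFDR_\eta(m)\le \alpha$ is equivalent to $E_\theta[V(m)] - \alpha E_\theta[R(m)] \le \alpha\eta$, and since the hypothesis $W(m)\ge 0$ gives $V(m)-\alpha R(m)\le A(m)$, it suffices to show $E_\theta[A(m)]\le A(0)$. I would obtain this by proving the one-step supermartingale property
\begin{equation*}
E_\theta\bigl[A(j)-A(j-1)\,\big|\,R_1,\ldots,R_{j-1}\bigr]\;\le\;0
\end{equation*}
for every $j$ and every $\theta\in\Theta$, and then telescoping.

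Plugging the update rule (\ref{equ_gai_update}) into the definition of $A(j)$ gives
\begin{equation*}
A(j)-A(j-1) \;=\; -\ccj + R_j\,\ddj + V_j - \alpha R_j.
\end{equation*}
Because the generalized investing rule (\ref{equ_gai_rule}) makes $\ccj$, $\ddj$, $\alpha_j$ deterministic functions of $R_1,\ldots,R_{j-1}$, we can pull them out of the conditional expectation and split into two cases according to whether the null $H_j$ holds.

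In the null case $\theta\in H_j$ we have $V_j=R_j$, so the conditional increment becomes $-\ccj + (\ddj+1-\alpha)\,E_\theta[R_j\mid\mathcal{F}_{j-1}]$. Assumption~\ref{ass_ind} bounds this conditional probability by $\alpha_j$, and since $\ddj+1-\alpha\ge 0$ (using $\ddj\ge 0$ and $\alpha\le 1$) the increment is at most $-\ccj + \alpha_j(\ddj+1-\alpha)$, which is $\le 0$ precisely when $\ddj\le \ccj/\alpha_j + \alpha - 1$. In the alternative case $\theta\notin H_j$ we have $V_j=0$, and $E_\theta[R_j\mid\mathcal{F}_{j-1}]\le \ppj$ by the same assumption. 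The increment has conditional mean $-\ccj+(\ddj-\alpha)\,E_\theta[R_j\mid\mathcal{F}_{j-1}]$, which is at most $-\ccj+(\ddj-\alpha)^+\ppj$; this is nonpositive iff $\ddj\le \ccj/\ppj+\alpha$ (the constraint being vacuous when $\ddj<\alpha$). These two inequalities are exactly the two arguments of the minimum in constraint (\ref{equ_gai_max_reward}), so both cases are covered and the supermartingale property holds.

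Taking total expectations and summing gives $E_\theta[A(m)]\le A(0)=\alpha\eta$, and combining with $W(m)\ge 0$ yields $E_\theta[V(m)]-\alpha E_\theta[R(m)]\le \alpha\eta$, i.e.\ $mFDR_\eta(m)\le \alpha$. The main subtlety is not a technical obstacle but a conceptual one: the true $\theta$ is unknown and could make different $H_j$'s true or false along the sequence, so the generalized rule must guard against both scenarios simultaneously at every step; this is exactly why constraint (\ref{equ_gai_max_reward}) takes the minimum of the two bounds rather than enforcing only one. A minor technical point is that one should check (or restrict attention to $\alpha\le 1$ so that) the sign condition $\ddj+1-\alpha\ge 0$ used in the null case, which legitimizes substituting the upper bound $\alpha_j$ into the conditional expectation.
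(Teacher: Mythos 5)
Your proof is correct and is essentially the paper's own argument: the paper proves that $\alpha R(j)-V(j)+\alpha\eta-W(j)$ is a sub-martingale (Lemma~\ref{lem_martingale}) and concludes from $W(m)\ge 0$, which is exactly your supermartingale for $W(j)+V(j)-\alpha R(j)$ viewed with the opposite sign, with the same case split on $\theta\in H_j$ versus $\theta\notin H_j$ and the same use of the two branches of constraint~\ref{equ_gai_max_reward} together with Assumption~\ref{ass_ind}. Your extra care with the sign condition $\ddj+1-\alpha\ge 0$ and the $(\ddj-\alpha)^+$ truncation matches the paper's (briefer) justification, so there is no substantive difference.
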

\begin{proof}

Requiring $mFDR_\eta(m)\leq \alpha$ is equivalent to requiring:
$E_\theta(\alpha R(m)-V(m))+\alpha \eta \geq 0$. Since the procedure
uses a non-negative potential function $W(j)$, it is sufficient to
require $E_\theta(\alpha R(m)-V(m)-W(m))+\alpha \eta \geq 0$.

Let $A(j)\equiv \alpha R(j)-V(j)+\alpha \eta - W(j)$.
Lemma~\ref{lem_martingale} proves $A_j$ is a sub-martingale with
respect to $R_j$.
From the sub-martingale properties it follows that $E_\theta(A(j))\geq A(0)$.
Since by definition $V(0)=R(0)=0$ and $W(0)=\alpha\eta$, we get $A(0)=0$.
Hence $E_\theta(A(j))\geq 0$.\qed
\end{proof}

\subsection{The Generalized Alpha Investing Trade-off Function and Wealth
Management}

In the Alpha Spending procedure, whenever a test is performed at level
$\alpha_j$, the same amount $\alpha_j$ is subtracted from the
$\alpha$-wealth, $W(j)$. In the Alpha Investing procedure, for a test at
level $\alpha_j$ we need to subtract a larger amount of
$\frac{\alpha_j}{1-\alpha_j}$, as prescribed by equation~\ref{equ_ai_update}.
Both procedures do not specify a
particular way for managing the $\alpha$-wealth, but leave it to the
user to define rule~\ref{equ_ai_rule}.

Generalized Alpha Investing procedure offers even more freedom.
It requires the user to choose separately the level
of the test $\alpha_j$, the amount $\ccj$ subtracted from the potential function $W(j)$, and
the reward $\ddj$. Any choice is valid as long as the potential
function does not become negative, $\ccj\leq W(j-1)$, and the reward
satisfies constraint~\ref{equ_gai_max_reward}.

In order to study the relation between the three user defined
parameters $\alpha_j$, $\ccj$, and $\ddj$, let us fix $\ccj$ on a
given amount, and set the reward $\ddj$ to the maximal value allowed
by constraint~\ref{equ_gai_max_reward}. According to this constraint,
$\ddj$ now becomes a function of $\alpha_j$ (note that $\ppj$ used in
the constraint is also a function of $\alpha_j$).

\begin{figure}
\centering
\makebox{\includegraphics[scale=0.4]{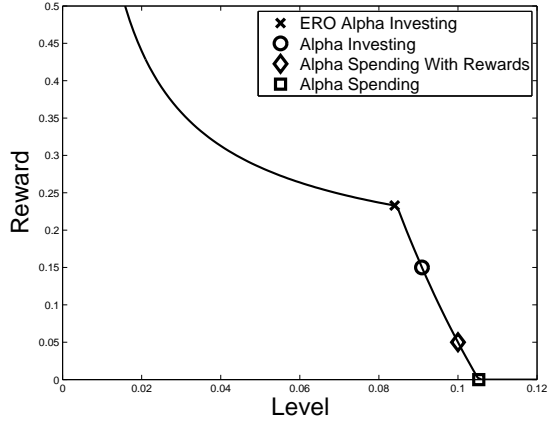}}
\caption{\label{fig_tradeoff}Trade-off between level and reward in Generalized Alpha Investing}
\end{figure}

Figure~\ref{fig_tradeoff} shows $\ddj$ as a function of $\alpha_j$,
for a test about the mean $\mu$ of a normal distribution with unknown
variance. We used a $t$-statistic with $9$ degrees of freedom,
$H_0:\mu=0$, $H_1:\mu=0.5$, $\alpha=0.05$, and $\ccj=0.1$.
The shape of the line reflects the two parts of
constraints~\ref{equ_gai_max_reward}:
$\frac{\ccj}{\ppj}+\alpha$ and $\frac{\ccj}{\alpha_j}+\alpha-1$,
where the convex part of the line left of the 'knee' is the latter and
the right part is the former.
It shows
the trade-off the Generalized Alpha Investing offers between the level
of the test $\alpha_j$ and the reward $\ddj$.
At one extreme, setting the maximal possible level
$\alpha_j=\frac{\ccj}{1-\alpha}$ results in no reward, $\ddj=0$. At
the other extreme, the reward tends to infinity, $lim_{\alpha_j\to 0} \ddj=\infty$.

In the following subsections we explore various options
of trade-off between $\alpha_j$ and $\ddj$, as shown in
figure~\ref{fig_tradeoff}.
Note that we do not deal theoretically with the question of how to choose $\ccj$.
As in Alpha Investing and Alpha Spending, management of the $W(j)$
pool is left for user discretion. In the simulation results of Subsection~\ref{subsec_gai_simulation} and Section~\ref{sec_qpd_simulation} we empirically explore some natural candidates.

Alpha Investing is a special case of Generalized Alpha
Investing, where $\alpha_j=\frac{\ccj}{1+\ccj}$. The corresponding
point is marked by a circle in the figure. This seems like an arbitrary point, but
Subsection~\ref{subsec_gai_opt} shows it is
not.

Subsection~\ref{subsec_gai_as} shows that the extreme point of
$\alpha_j=\frac{\ccj}{1-\alpha}$, is in fact Alpha Spending. It
further shows another special case we term Alpha Spending with Rewards
which has some practical benefits. The rhombus marked in the figure is
for the case $k=1$, where $k$ is a configurable parameter of the Alpha
Spending with Rewards method, as defined in the next subsection. Other
values of $k$ may reach other points in the figure.

Subsection~\ref{subsec_gai_opt} shows that, under certain assumptions, the 'knee' noticeable in
the figure
is the optimal point for a
criterion we term Expected Reward Optimal (\ero).
We call this special case ERO Alpha Investing.

\subsection{Alpha Spending with Rewards}
\label{subsec_gai_as}

In Generalized Alpha Investing the level of the test $\alpha_j$
may be chosen independently of choosing
the amount $\ccj$ subtracted from the potential function.
A special case, therefore, is when choosing
$\ccj=k \alpha_j$, for some constant $k$.
This special case is of interest, since it can be viewed as direct
extension of Alpha Spending: By scaling the units of the potential
function $W(j)$, we can restore its original meaning of
$\alpha$-wealth, and have the level of the test $\alpha_j$ be directly
subtracted from it.  This is in contrast to Alpha Investing, where
$\alpha_j$ is non-linearly dependent on $\ccj$. We term this special
case ``Alpha Spending with Rewards'',
defined more formally next.

We define an Alpha Spending with Rewards rule, that
determines the level of the $j$'th test and the reward in case $H_j$
is rejected, based on $W(0)$ and the history of rejections.
\begin{equation}
(\ddj,\alpha_j)=\JJ_{W(0)}(\{R_1,R_2,\ldots,R_{j-1}\}).
\label{equ_asr_rule}
\end{equation}

The wealth is initialized and updated as follows.
\begin{eqnarray}
\begin{array}{rcl}
W(0) & = & \frac{\alpha \eta}{k} \\
W(j) & = & W(j-1)-\alpha_j+R_j \ddj.
\end{array}
\label{equ_asr_update}
\end{eqnarray}

\begin{definition}

An {\em Alpha Spending with Rewards} procedure
is a sequential hypothesis testing
procedure such that $\alpha_j$ and $\ddj$ are determined
by rule~\ref{equ_asr_rule}, the $\alpha$-wealth is
initialized and updated by equations~\ref{equ_asr_update},
$\forall j : W(j)\geq 0$, and the following inequality holds:
\begin{equation}
\ddj \leq min(\frac{\alpha_j}{\ppj}+\frac{\alpha}{k},1-\frac{1-\alpha}{k}),
\end{equation}
where $\ppj$ the {\em best power} of the $j$'th test.
\label{def_asr}
\end{definition}

We now proceed to show Alpha Spending with Rewards is indeed a special
case of the Generalized Alpha Investing Procedure, hence it controls
$mFDR_\eta$ at level $\alpha$.

\begin{proposition}
Alpha Spending with Rewards is a special case of the Generalized Alpha
Investing procedure.
\label{prop_asr_is_gai}
\end{proposition}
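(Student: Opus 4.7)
The plan is to exhibit any Alpha Spending with Rewards (ASR) instance as a Generalized Alpha Investing (GAI) instance via a linear rescaling of the potential function by the factor $k$. Given an ASR procedure with parameters $(\alpha_j,\ddj)$ and potential $W(j)$, I would define a corresponding GAI procedure that uses the same test levels $\alpha_j$, subtracts $\ccj = k\alpha_j$ from its potential, and issues the reward $k\ddj$ upon rejection. Denoting the GAI potential by $\widetilde W(j)$, the intended correspondence is $\widetilde W(j) = k\, W(j)$.

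First I would verify that the two procedures agree after scaling. At initialization, $\widetilde W(0) = \alpha\eta = k\cdot(\alpha\eta/k) = k\, W(0)$, matching the ASR initialization. For the update step, substituting $\ccj = k\alpha_j$ and the GAI reward $k\ddj$ into equation~\ref{equ_gai_update} yields exactly $k$ times the ASR update~\ref{equ_asr_update}; the equality $\widetilde W(j) = k\, W(j)$ then follows by induction on $j$. Since $k>0$, the non-negativity requirement $\widetilde W(j)\geq 0$ is equivalent to the ASR requirement $W(j)\geq 0$.

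The only substantive step is to check that the GAI reward constraint~\ref{equ_gai_max_reward} reduces precisely to the ASR reward bound in Definition~\ref{def_asr}. Substituting $\ccj = k\alpha_j$ into the two arguments of the minimum in~\ref{equ_gai_max_reward} gives $k\alpha_j/\ppj + \alpha$ and $k+\alpha-1$; requiring the GAI reward $k\ddj$ to be bounded by these two quantities, then dividing through by $k$, yields exactly $\ddj \leq \min(\alpha_j/\ppj + \alpha/k,\, 1-(1-\alpha)/k)$, which is the ASR constraint. The non-negativity $\ddj \geq 0$ carries over unchanged.

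I do not anticipate any genuine obstacle here: the argument is essentially a change of units, and the otherwise puzzling factor $1/k$ appearing in the ASR initial wealth $\alpha\eta/k$ and in the ASR reward bound is precisely what is needed to make the identification line up. Once the correspondence is established, Theorem~\ref{theorem_gai_controls_mfdr} immediately implies that every ASR procedure controls $mFDR_\eta$ at level $\alpha$, which is the claimed corollary of the proposition.
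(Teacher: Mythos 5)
Your proposal is correct and follows essentially the same route as the paper's own (one-line) proof: rescale the potential by $k$, take $\ccj = k\alpha_j$ and reward $k\ddj$, so the ASR procedure becomes a GAI procedure; your explicit verification that constraint~\ref{equ_gai_max_reward} reduces to the ASR bound in Definition~\ref{def_asr} is accurate and simply fills in details the paper leaves implicit.
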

\begin{proof}
Choosing $W'(j)=k W(j)$, $\ccj=k \alpha_j$ and $\ddj'=k \ddj$
we get a Generalized Alpha Investing procedure with potential function
$W'(j)$, costs $\ccj$ and rewards $\ddj'$.\qed
\end{proof}
\begin{corollary}
Given Assumption~\ref{ass_ind}, Alpha Spending with Rewards controls
$mFDR_\eta$ at level $\alpha$.
\end{corollary}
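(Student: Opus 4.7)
The plan is to observe that this corollary is essentially immediate from the two results that precede it, and that the only work is to confirm that the scaling argument in Proposition~\ref{prop_asr_is_gai} genuinely puts an Alpha Spending with Rewards instance into the hypotheses of Theorem~\ref{theorem_gai_controls_mfdr}.

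First I would invoke Proposition~\ref{prop_asr_is_gai}, which exhibits an explicit embedding of any Alpha Spending with Rewards procedure into the Generalized Alpha Investing framework via the rescaling $W'(j)=kW(j)$, $\ccj = k\alpha_j$, $\ddj' = k\ddj$. I would briefly verify that this rescaling produces a bona fide generalized procedure in the sense of Definition~\ref{def_genai}: the initial potential becomes $W'(0)=k\cdot\frac{\alpha\eta}{k}=\alpha\eta$, the update rule~\ref{equ_gai_update} for $W'(j)$ reduces under the scaling to the ASR update rule~\ref{equ_asr_update}, and the non-negativity of $W(j)$ transfers to $W'(j)$ since $k>0$.

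Next I would check that the reward constraint in Definition~\ref{def_asr} is exactly the image of the reward constraint~\ref{equ_gai_max_reward} under the same rescaling. Substituting $\ccj=k\alpha_j$ and $\ddj'=k\ddj$ into the bound $\ddj'\leq\min(\ccj/\ppj+\alpha,\;\ccj/\alpha_j+\alpha-1)$ and dividing through by $k$ yields precisely $\ddj\leq\min(\alpha_j/\ppj+\alpha/k,\;1-(1-\alpha)/k)$, which is the ASR constraint. Since Assumption~\ref{ass_ind} is a statement about the tests themselves and is indifferent to the bookkeeping of the potential function, it carries over verbatim.

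With the embedding established, Theorem~\ref{theorem_gai_controls_mfdr} applies and gives $mFDR_\eta(m)\leq\alpha$ for the ASR procedure. The only place one could slip up is in verifying the reward constraint equivalence — that is the main (minor) obstacle, since one must be careful that the two pieces of the min in Definition~\ref{def_asr} really come from the two pieces in~\ref{equ_gai_max_reward} after the division by $k$, and not from some stray rescaling of $\alpha$ or $\eta$. Once that algebra is pinned down, the corollary follows with no further work.
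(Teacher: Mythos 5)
Your proposal is correct and follows exactly the paper's route: the corollary is obtained by combining Proposition~\ref{prop_asr_is_gai} (the rescaling $W'(j)=kW(j)$, $\ccj=k\alpha_j$, $\ddj'=k\ddj$) with Theorem~\ref{theorem_gai_controls_mfdr}, which is precisely what the paper does. Your explicit verification that the ASR reward constraint is the image of constraint~\ref{equ_gai_max_reward} under division by $k$ merely spells out algebra the paper leaves implicit, so there is no substantive difference.
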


\begin{proposition}
Alpha Spending is a special case of Alpha Spending with rewards, and
hence of the Generalized Alpha Investing Procedure.
\end{proposition}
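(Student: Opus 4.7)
The plan is to exhibit Alpha Spending as a degenerate instance of Alpha Spending with Rewards in which the reward term is identically zero, and then invoke Proposition~\ref{prop_asr_is_gai} to obtain the ``hence Generalized Alpha Investing'' half of the claim for free. Concretely, given an Alpha Spending procedure with initial $\alpha$-wealth $W(0)=\alpha$ and a sequence of levels $\alpha_j$ satisfying $\sum_j\alpha_j\leq\alpha$, I would construct an Alpha Spending with Rewards procedure using the same $\alpha_j$, setting $\ddj\equiv 0$ for every $j$, and picking the free constants so that $\alpha\eta/k=\alpha$, e.g.\ $k=\eta$ (the customary choice $\eta=1-\alpha$ is fine). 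The investing rule~\ref{equ_asr_rule} then simply returns $(0,\alpha_j)$ as a deterministic function of the rejection history, exactly as Alpha Spending does.

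With this choice, the update rule~\ref{equ_asr_update} collapses to $W(j)=W(j-1)-\alpha_j$, which is literally the Alpha Spending recursion, and the nonnegativity requirement $W(j)\geq 0$ is just the Alpha Spending budget constraint $\sum_{i\leq j}\alpha_i\leq W(0)$. The only nontrivial item to check is the reward constraint in Definition~\ref{def_asr}, namely $0\leq\ddj\leq\min(\alpha_j/\ppj+\alpha/k,\,1-(1-\alpha)/k)$. Since $\ddj=0$, the left inequality is immediate, and the right inequality reduces to verifying that the minimum on the right is nonnegative. The first term $\alpha_j/\ppj+\alpha/k$ is manifestly nonnegative, while the second term $1-(1-\alpha)/k$ is nonnegative precisely when $k\geq 1-\alpha$, which is satisfied by the natural choice $k=\eta=1-\alpha$ (and more generally one may simply take $k$ large enough).

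Having verified the three items (investing rule, update rule, reward constraint), the constructed procedure is a valid Alpha Spending with Rewards procedure that reproduces exactly the accept/reject behavior of the original Alpha Spending procedure. Applying Proposition~\ref{prop_asr_is_gai} then shows it is a fortiori a Generalized Alpha Investing procedure, completing the proof. The only subtle point to be careful about is matching initial wealths and keeping the reward constraint consistent with the choice of $k$; once $k=\eta$ (or any $k\geq 1-\alpha$) is fixed, everything else is essentially bookkeeping, so I do not anticipate any real obstacle.
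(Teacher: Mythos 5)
Your proposal is correct and follows essentially the same route as the paper: take the degenerate zero-reward case of Alpha Spending with Rewards, observe that the update rule collapses to $W(j)=W(j-1)-\alpha_j$, check the reward constraint (which forces $k\geq 1-\alpha$), and invoke Proposition~\ref{prop_asr_is_gai} for the Generalized Alpha Investing part. The only difference is bookkeeping: the paper fixes $k=1-\alpha$ (the minimal choice, which itself forces $\ddj=0$ and yields initial wealth $\alpha\eta/(1-\alpha)$), whereas you set $\ddj\equiv 0$ directly and take $k=\eta$ to match $W(0)=\alpha$ --- the two coincide under the customary $\eta=1-\alpha$.
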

\begin{proof}
In an Alpha Spending with Rewards procedure,
choosing $k=1-\alpha$, we get $\ddj=0$,
and $W(j)=W(j-1)-\alpha_j$. Hence this is equivalent to Alpha Spending
with initial wealth $W(0)=\frac{\alpha\eta}{1-\alpha}$.\qed
\end{proof}

Choosing $k=1-\alpha$ is the minimal value for $k$ that has
nonnegative rewards. Increasing $k$ increases the rewards, at the
expense of reducing $W(0)$.

Alpha Spending with Rewards can be more easily combined
 with systems based on Alpha
Spending, since its equation for updating the $\alpha$-wealth is
 identical to that of Alpha Spending, except for the initialization
 and the rewards. This point is demonstrated in following sections with the QPD.

\subsection{Optimizing the Generalized Alpha Investing Procedure}
\label{subsec_gai_opt}

In this subsection we find the optimal point of
 trade-off between $\alpha_j$ and $\ddj$, when given a potential
 investment amount $\ccj$.

Our criterion of optimality is $E_\theta(R_j) \ddj$, the expected reward
of the current test.
This is a natural optimization criterion, since aiming to increase our
expected reward is related to the power of the current test as well as
to our potential to perform additional tests. Note that as $\alpha_j$
increases, $E_\theta(R_j)$ increases but $\ddj$ decreases because of
constraint~\ref{equ_gai_max_reward}. Therefore it is not trivial to
maximize their product, $E_\theta(R_j)\ddj$.

Since $\theta$ is an unknown parameter, we seek a procedure for
which optimality is guaranteed for all $\theta$, as in the
following definition.

\begin{definition}
A Generalized Alpha Investing Procedure is {\em expected reward
optimal} (\ero) if at any point in time $j$, the procedure chooses
$\alpha_j$ and $\ddj$ such that for any other values
$\alpha_j'$ and $\ddj'$ and for any $\theta\in \Theta$ it holds that
$E_\theta(R_j)\ddj\geq E_\theta(R_j')\ddj'$, where $R_j'$ is the
random variable corresponding to $R_j$ for a test executed at level $\alpha_j'$.
\label{def_expected_reward_optimal}
\end{definition}

An \ero~procedure can be found under some assumptions on the relation between level and power.
This occurs when choosing the $\alpha_j$ and $\ddj$ at the intersection between the two parts of constraint~\ref{equ_gai_max_reward}, as shown in figure~\ref{fig_tradeoff}. We term this choice \ero~Alpha investing:

\begin{definition}
An \ero~Alpha Investing procedure is a Generalized Alpha Investing procedure that chooses
$\alpha_j$ as the solution of
$\frac{\ccj}{\ppj}=\frac{\ccj}{\alpha_j}-1$
and
$\ddj=min(\frac{\ccj}{\ppj}+\alpha,\frac{\ccj}{\alpha_j}+\alpha-1)$.
\label{def_ero_alpha_investing}
\end{definition}

We now proceed to show that this is indeed an \ero~procedure under certain assumptions.
In the appendix we provide a general Theorem~\ref{theorem_ero_full}, which fully lists the set of required assumptions, and is rigorously proven. However, since these assumptions and proofs involve many technicalities, we provide here a simpler theorem, that captures some practical cases.

\begin{theorem}
For a series of
uniformly most powerful tests with continuous distribution functions, a single parameter $\theta$, and a simple null hypothesis $H_0:\theta=\theta_0$,
\ero~Alpha Investing is an \ero~procedure.
\label{theorem_gai_is_opt}
\end{theorem}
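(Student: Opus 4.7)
The plan is to verify Definition~\ref{def_expected_reward_optimal} directly. Writing $\pi(\alpha,\theta)=E_\theta(R_j)$ for the power of the level-$\alpha$ UMP test, and abbreviating the two upper bounds in constraint~(\ref{equ_gai_max_reward}) as $A(\alpha)=\ccj/\ppj(\alpha)+\alpha$ and $B(\alpha)=\ccj/\alpha+\alpha-1$, I want to show that for every feasible competitor $(\alpha_j',\ddj')$ and every $\theta\in\Theta$, $\pi(\alpha_j',\theta)\,\ddj'\le\pi(\alpha_j^*,\theta)\,\ddj^*$, where $(\alpha_j^*,\ddj^*)$ is the \ero~choice of Definition~\ref{def_ero_alpha_investing}. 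Since $\pi(\alpha_j',\theta)\ge 0$, the inequality is hardest when the competitor saturates the reward constraint, so without loss of generality $\ddj'=\min(A(\alpha_j'),B(\alpha_j'))$, and the problem reduces to showing that $\alpha\mapsto\pi(\alpha,\theta)\min(A(\alpha),B(\alpha))$ is maximized at $\alpha_j^*$, uniformly in $\theta$.

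By Definition~\ref{def_ero_alpha_investing}, $\alpha_j^*$ is precisely the point at which $A(\alpha_j^*)=B(\alpha_j^*)=\ddj^*$. The next step is to argue that $\alpha\mapsto\ccj/\alpha-\ccj/\ppj(\alpha)$ is strictly decreasing, using $\ppj(\alpha)\ge\alpha$ together with concavity of $\alpha\mapsto\ppj(\alpha)$ through the origin (standard properties of the ROC of a UMP test of a simple null with continuous distribution function). This places $A(\alpha)\le B(\alpha)$ precisely on $(0,\alpha_j^*]$ and the reverse inequality on $[\alpha_j^*,\infty)$, so the envelope $\min(A,B)$ coincides with $A$ to the left of the knee and with $B$ to the right. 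It therefore suffices to verify, for every fixed $\theta$, two one-sided monotonicities: (i)~$\alpha\mapsto\pi(\alpha,\theta)A(\alpha)$ is non-decreasing on $(0,\alpha_j^*]$, and (ii)~$\alpha\mapsto\pi(\alpha,\theta)B(\alpha)$ is non-increasing on $[\alpha_j^*,\infty)$. The continuity $A(\alpha_j^*)=B(\alpha_j^*)$ then glues them into the global maximum at $\alpha_j^*$.

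Step~(ii) is the easier one: expand $\pi\cdot B=\ccj\,\pi(\alpha,\theta)/\alpha+(\alpha-1)\pi(\alpha,\theta)$. The second summand is decreasing because $\alpha<1$ and $\pi(\cdot,\theta)$ is non-decreasing, and the first is non-increasing because the ROC curve $\alpha\mapsto\pi(\alpha,\theta)$ is concave through the origin under the continuity assumption, making $\pi(\alpha,\theta)/\alpha$ non-increasing in $\alpha$---the classical Neyman--Pearson consequence. For step~(i), write $\pi\cdot A=\alpha\,\pi(\alpha,\theta)+\ccj\,\pi(\alpha,\theta)/\ppj(\alpha)$; the first summand is plainly non-decreasing, and the crux is to show the ratio $\pi(\alpha,\theta)/\ppj(\alpha)$ is non-decreasing in $\alpha$. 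For $\theta=\theta_0$ this ratio is $\alpha/\ppj(\alpha)$, whose non-decreasing behavior comes from applying the same ROC-concavity argument to the alternative parameter realizing $\ppj$; for $\theta$ in the alternative it reduces to a ratio of two power curves along the same Neyman--Pearson rejection-region parametrization, whose monotonicity follows from the single-parameter UMP structure.

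The main obstacle is step~(i): whereas (ii) rests on a single classical fact, step~(i) is where the full package of hypotheses (UMP, continuity, single parameter, simple null) must do its work, since it involves controlling a ratio of two power functions uniformly over $\theta\in\Theta$. The fully general version, with precisely formulated assumptions, is deferred to Theorem~\ref{theorem_ero_full} in the appendix; the sketch above isolates the essential mechanism that makes $\alpha_j^*$ the uniform maximizer in the practical cases captured here.
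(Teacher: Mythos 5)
Your proposal follows essentially the same route as the paper's own argument, which is deferred to the appendix (Theorem~\ref{theorem_ero_full} together with Lemmas~\ref{lem_equ_solve}, \ref{lem_equ_opt} and~\ref{lem_ump_satisfies}): reduce to a competitor that saturates constraint~\ref{equ_gai_max_reward}, then show that $E_\theta(R_j)(\frac{\ccj}{\ppj}+\alpha)$ is non-decreasing in $\alpha_j$ while $E_\theta(R_j)(\frac{\ccj}{\alpha_j}+\alpha-1)$ is non-increasing, these monotonicities being exactly what the paper packages as Assumption~\ref{regular_test} and verifies for UMP tests with a simple null via the Neyman--Pearson lemma. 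Two structural remarks. Your single-crossing digression (deciding which branch of the $\min$ is active on each side of the knee) is unnecessary: since your monotonicity claims in steps (i) and (ii) actually hold on the whole range of $\alpha_j$ and the two bounds coincide at $\alpha_j^\star$, one gets directly $E_\theta(R_j')\ddj' \leq E_\theta(R_j')(\mbox{one bound at } \alpha_j') \leq E_\theta(R_j^\star)(\mbox{same bound at }\alpha_j^\star) = E_\theta(R_j^\star)\ddj^\star$, which is precisely how Lemma~\ref{lem_equ_opt} closes. Also note that in constraint~\ref{equ_gai_max_reward} the additive terms $\alpha$ and $\alpha-1$ involve the fixed overall level $\alpha$, not the variable test level; your justification of step (ii) (``the second summand is decreasing because $\alpha<1$ and $\pi$ is non-decreasing'') only parses under that reading, which is the correct one, so make the two roles of $\alpha$ notationally distinct.

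The one genuine omission is existence of the knee: you take $\alpha_j^\star$ solving $\frac{\ccj}{\ppj}=\frac{\ccj}{\alpha_j}-1$ for granted, whereas the paper treats solvability as a required component (Lemma~\ref{lem_equ_solve}), and it is not automatic. Your strict monotonicity of $\frac{\ccj}{\alpha_j}-\frac{\ccj}{\ppj}$ yields at most uniqueness of the crossing, not its existence; the delicate case is $\ppj\to 0$ as $\alpha_j\to 0$, where the paper uses distinctness of null and alternative to obtain $\frac{\ppj}{\alpha_j}\geq 1+\epsilon$ near zero and hence the correct sign at the left endpoint, while the sign at $\alpha_j=1$ is immediate. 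Supplying that endpoint analysis would complete your sketch. Finally, your appeal to ``the single-parameter UMP structure'' for monotonicity of the ratio of two power curves in step (i) is at the same level of brevity as the paper's own ``a similar argument using the Neyman--Pearson lemma applies'' in Lemma~\ref{lem_ump_satisfies}, so it is not a gap relative to the paper's standard of rigor.
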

\begin{proof}
Proven by Theorem~\ref{theorem_ero_full}, and Lemma~\ref{lem_ump_satisfies}, given in the appendix.
\end{proof}

One simple case covered by Theorem~\ref{theorem_gai_is_opt} is a series of Neyman-Pearson tests, where both null and alternative hypotheses are simple. A more general case is when the tests are uniformly most powerful, and when the null hypothesis remains simple. Such a case may occur, for example, when performing a z-test with null hypothesis $H_0:\theta=0$, and alternative $H_1:\theta > 0$ (i.e., when it is known that $\theta$ is non-negative).

Now that we've established an \ero~procedure exists, we proceed to show that for unbounded alternative hypothesis, this procedure in fact reduces to the original Alpha Investing.

\begin{corollary}
When the \ero~Alpha Investing is an \ero~procedure, the original Alpha Investing is also an \ero~procedure if and only if
$\forall_j: \ppj=1$.
\label{cor_reduce}
\end{corollary}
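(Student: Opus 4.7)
The plan is to reduce both procedures to closed-form choices of $(\alpha_j,\ddj)$ for a given $\ccj$ and compare them directly, then read off the biconditional from behavior at the null.

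First I would solve the defining equation of \ero~Alpha Investing in Definition~\ref{def_ero_alpha_investing}. The relation $\ccj/\ppj = \ccj/\alpha_j - 1$ rearranges to $\alpha_j^{\ero} = \ccj\ppj/(\ccj + \ppj)$, at which point both branches of constraint~\ref{equ_gai_max_reward} coincide and $\ddj^{\ero} = \ccj/\ppj + \alpha$. For Alpha Investing, the embedding used in the earlier proposition that realized it as a special case of the Generalized procedure sets $\ccj = \alpha_j/(1-\alpha_j)$ and $\ddj = \ccj + \omega$; equivalently $\alpha_j^{AI} = \ccj/(1+\ccj)$ and, under the standard choice $\omega=\alpha$, $\ddj^{AI} = \ccj+\alpha$. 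A one-line manipulation then gives $\alpha_j^{\ero} = \alpha_j^{AI}$ iff $\ppj(1+\ccj) = \ccj + \ppj$, i.e., iff $\ppj = 1$ (whenever $\ccj > 0$).

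For the forward implication, if $\ppj = 1$ at every $j$ the two closed forms coincide, so Alpha Investing and \ero~Alpha Investing prescribe identical $(\alpha_j,\ddj)$ at every step. Since the hypothesis of the corollary asserts that \ero~Alpha Investing is \ero, Alpha Investing inherits the property trivially.

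For the converse, suppose $\ppj < 1$ at some index with $\ccj > 0$. I would compare the expected-reward product $E_{\theta_0}(R_j)\ddj$ at the simple null $\theta = \theta_0$. The simple-null UMP setting inherited from Theorem~\ref{theorem_gai_is_opt} gives $P_{\theta_0}(R_j = 1) = \alpha_j$ for a size-$\alpha_j$ test, so the two products evaluate to $\ccj(\ccj + \alpha)/(1+\ccj)$ for Alpha Investing and $\ccj(\ccj + \alpha\ppj)/(\ccj + \ppj)$ for \ero~Alpha Investing. A direct computation reduces their difference to $-\ccj^{2}(1-\ppj)(1-\alpha)/\bigl((1+\ccj)(\ccj+\ppj)\bigr)$, which is strictly negative when $\ppj < 1$. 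Hence Alpha Investing is strictly dominated at $\theta_0$ and therefore cannot itself be \ero. The only real obstacle is this algebraic simplification; no separate uniqueness-of-maximizer argument is required, because a strict loss at one $\theta$ already contradicts \ero~optimality in the sense of Definition~\ref{def_expected_reward_optimal}.
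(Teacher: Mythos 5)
Your proposal is correct, and on the converse direction it is actually more explicit than the paper's own proof. The paper argues in one line per direction: setting $\ppj=1$ in Definition~\ref{def_ero_alpha_investing} yields $\alpha_j=\ccj/(1+\ccj)$, the Alpha Investing level, and it then asserts that any other value of $\ppj$ ``clearly yields different levels,'' leaving implicit the step from ``different level'' to ``not an \ero~procedure'' (which in principle needs strict suboptimality of the Alpha Investing choice for some $\theta$; Lemma~\ref{lem_equ_opt} only shows the knee is a maximizer, not that other points lose strictly). Your forward direction is essentially the paper's: with $\ppj\equiv 1$ (and the standard $\omega=\alpha$, which the corollary implicitly assumes) the two procedures prescribe identical $(\alpha_j,\ddj)$, so Alpha Investing inherits \ero-ness. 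Your converse supplies the missing strictness: at the simple null, $E_{\theta_0}(R_j)=\alpha_j$ for these exact tests, and your algebra is right --- the difference of the two products is $-\ccj^{2}(1-\ppj)(1-\alpha)/\bigl((1+\ccj)(\ccj+\ppj)\bigr)<0$ when $\ppj<1$ and $\ccj>0$ --- which directly violates Definition~\ref{def_expected_reward_optimal} at $\theta=\theta_0$, so no uniqueness-of-maximizer argument is needed. The only caveat, shared with the paper, is that $\ppj$ is in general a function of $\alpha_j$, so your closed form $\alpha_j=\ccj\ppj/(\ccj+\ppj)$ should be read as an identity holding at the solution with $\ppj$ evaluated there; this does not affect either direction of your argument.
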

\begin{proof}
Setting $\ppj=1$ in Definition~\ref{def_ero_alpha_investing} yields
$\alpha_j=\frac{\ccj}{\ccj+1}$, which is the level used by Alpha
Investing for a given cost $\ccj$. Any other value of $\ppj$ clearly
yields different levels.
\end{proof}

To summarize this subsection, we have shown that in some cases an \ero~procedure can be found. In these cases, if the alternative hypothesis stretches to infinity, then the original Alpha Investing is identical to \ero~Alpha Investing.
However, if the alternative hypothesis is simple or limited in range, then \ero~Alpha Investing will have larger expected rewards than Alpha Investing, and therefore we expect it to be more powerful.

The next subsection explores this expected theoretical gain in simulations.

\subsection{Simulation Results}
\label{subsec_gai_simulation}

\begin{table}
\caption{\label{table_constant}Simulation results for the 'constant'
allocation scheme. Best results, marked in bold, were found to be
significantly better using paired t-test (see text for details).}
\centering
\fbox{%
\begin{tabular}{p{6cm} | p{1cm} | p{1cm} | p{1cm} | p{1cm} | p{1cm}}
\em Procedure       & \em Tests        & \em True rejects & \em False rejects & \em mFDR \\
\hline
Alpha Spending &       10.000 &        0.283 &        0.041 &        0.032 \\
Alpha Investing &       15.908 &        0.435 &        0.064 &        0.044 \\
Alpha Spending with Rewards (k=1) &       15.188 &        0.416 &        0.061 &        0.043 \\
Alpha Spending with Rewards (k=1.1) &       17.888 &        0.476 &        0.066 &        0.044 \\
\ero~Alpha Investing &      {\bf 18.355} &      {\bf  0.504} &        0.073 &        0.048 \\
\end{tabular}}
\end{table}

\begin{table}
  \caption{\label{table_relative}Simulation results for the 'relative' allocation scheme. Best
  results, marked in bold, were found to be significantly better using
  paired t-test (see text for details).}
  \centering
\fbox{%
\begin{tabular}{p{6cm} | p{1cm} | p{1cm} | p{1cm} | p{1cm} | p{1cm}}
\em Procedure       & \em Tests        & \em True rejects & \em False rejects & \em mFDR \\
\hline
Alpha Spending &       66.000 &        0.559 &        0.043 &        0.028 \\
Alpha Investing &       81.529 &        0.868 &        0.086 &        0.045 \\
Alpha Spending with Rewards (k=1) &       81.202 &        0.856 &        0.083 &        0.044 \\
Alpha Spending with Rewards (k=1.1) &       81.631 &        0.849 &        0.084 &        0.044 \\
\ero~Alpha Investing &       {\bf 82.626} &       {\bf 0.905} &        0.093 &        0.048 \\
\end{tabular}}
\end{table}

\begin{table}
  \caption{\label{table_rel200}Simulation results for the 'relative-200' allocation scheme. Best
  results, marked in bold, were found to be significantly better using
  paired t-test (see text for details).}
  \centering
\fbox{%
\begin{tabular}{p{6cm} | p{1cm} | p{1cm} | p{1cm} | p{1cm} | p{1cm}}
\em Procedure & \em Tests & \em True rejects & \em False rejects & \em mFDR \\
\hline
Alpha Spending &      200 &        0.576 &        0.046 &        0.029 \\
Alpha Investing &      200 &        0.885 &        0.090 &        0.047 \\
Alpha Spending with Rewards (k=1) &      200 &        0.873 &        0.088 &        0.046 \\
Alpha Spending with Rewards (k=1.1) &      200 &        0.867 &        0.090 &        0.047 \\
\ero~Alpha Investing &      200 &       {\bf 0.931} &        0.101 &        0.051 \\
\end{tabular}}
\end{table}

We simulated various variants of Generalized Alpha Investing on a
sequence of independent z-tests. Each z-test was executed on a single sample
from $N(\mu,1)$, with $\mu=0$ chosen with probability $0.9$,
otherwise $\mu=2$. The null hypothesis states that $\mu=0$.

Five procedures which were discussed in this paper were tested:
Alpha Spending, Alpha Investing, \ero~Alpha Investing, and two variants of
Alpha Spending with Rewards, one with $k=1$ and another with
$k=1.1$. These two values were chosen since the former typically
produces
values of $\alpha_j$ larger than the \ero~Alpha Investing,
while the latter produces lower values.

Each variant was executed on the same random sequence of tests, after which we
recorded three measures: the number of tests that were executed until the potential function was
depleted, the number of false null hypotheses that were rejected (true rejects), and the
number of true null hypotheses rejected (false rejects).
We repeated this experiment $10,000$ times, which allowed us to obtain reliable estimates of the expectation of each
measure, and to perform paired t-tests to confirm significance of differences. We also estimated $mFDR_\eta=\frac{E(V)}{E(R)+\eta}$ by replacing the expectations with the observed means.

The $\alpha$-wealth pool (or potential, in the case of Generalized
Alpha Investing) was initialized and managed as follows.
We set $\alpha=0.05$ and $\eta=1-\alpha$. We used three schemes for
allocating $\ccj$. The 'constant' scheme defines
$\ccj=min(\frac{1}{10} W(0),W(j))$ and it continues to perform tests
until the potential function $W(j)$ is depleted. The 'relative' scheme
defines $\ccj=\frac{1}{10}W(j-1)$ and it continues until
$W(j)<\frac{1}{1000}W(0)$. The 'relative-200' scheme also defines
$\ccj=\frac{1}{10}W(j-1)$, only it stops not when $W(j)$ is depleted,
but exactly after 200 tests.

Tables~\ref{table_constant},~\ref{table_relative}, and~\ref{table_rel200}
depict the average results over the $10,000$ simulations.
The results demonstrate the superiority of the \ero~Alpha Investing in
all three cases.
In addition to the averages shown in the tables, we performed paired comparisons between \ero~Alpha Investing and the other procedures (recall that each procedure was executed on the same $10,000$, randomly drawn sequences of tests).
For example, in the 'relative' allocation scheme: \ero~Alpha Investing performed at least as many tests and at least as many true rejections as Alpha Investing in $99.9\%$ of the $10,000$ simulations.
A two-sided paired t-test confirms with p-value $2.8 \times 10^{-92}$ that it performed significantly more tests, and more true rejections with p-value $1.7 \times 10^{-36}$.
Across all simulation settings and all competing approaches, ERO Alpha Investing delivered more true rejections
in at least $99\%$ of the $10,000$ repetitions, and all two-sided p-values were no bigger than $10^{-17}$.

The 'relative-200' case depicted in
table~\ref{table_rel200}
stresses a
different aspect of the advantages of \ero~Alpha
Investing. In the 'constant' and 'relative' cases, demonstrated by tables~\ref{table_relative}
and~\ref{table_constant}, the \ero~Alpha Investing
manages to execute on average more tests than the other procedures,
thanks to its maximization of the expected rewards. The 'relative-200'
scheme neutralizes this advantage, since all procedures stop after 200
tests, discarding whatever amount they have left in the pool. However, as noted above, it still significantly outperforms all other procedures in terms of number of true rejections.
This is due to having it build and maintain a larger pool, on average, than the
other procedures. This allows it to make larger allocations to
individual tests, and make more discoveries.

Not surprisingly, the number of false rejections by ERO Alpha Investing is also significantly higher than other approaches in these examples. But since it is successful in attaining the desired $mFDR$ level, this does not diminish its advantage in delivering more true rejections.

The simulations above were performed with $10\%$ false null hypotheses, with an effect size of $2$. Both of these values are quite large, and help make the advantages of \ero~Alpha Investing more pronounced. We've repeated these simulations with other values as well. We have tried all the combinations of false null probability $p=0.1, 0.05, 0.02, 0.01$, and effect sizes $\delta=2, 1, 0.5, 0.2, 0.1$.
Naturally, in harder settings all procedures perform less rejections, and the differences between them diminish.
Still, compared with Alpha Investing, \ero~Alpha Investing performed significantly more tests in all of these cases (maximal p-value 0.006). Regarding true rejections, the advantage was significant only for $\delta=2$ when $p\geq 0.02$ and for $\delta=1$ when $p=0.1$.
Full results for the extreme case $\delta=0.1$ and $p=0.01$ are given in the Appendix.

\section{The Quality Preserving Database (QPD)}
\label{section_qpd}

Researchers usually access public databases freely, and
perform statistical tests unchecked. This of course leads to multiple
testing issues, since the users of this data may be spread around the
globe, and have no knowledge of each other. Lack of
control over type-I errors is widely acknowledged as a source of
misleading results~\citep{bias,gene_combine,gene_false,ion}.

The QPD addresses this issue by adding to the database
an additional management layer that
handles the statistical tests that are performed on it.
Instead of
allowing unrestrained access, each user wishing to perform a
statistical test must first describe the characteristics of this test
to the QPD's manager. The manager allocates the level at
which this test will be performed, so as to guarantee some measure of
the overall type-I error is controlled, e.g., FWER.

Since the total level allocation is limited, and since the number of
test requests the QPD should serve is unbounded, a stringent scheme of
continuously decreasing level allocations is required. This implies
gradual power loss. To avoid that, users of a QPD are required to
compensate for the usage of the level allocated for them.
This compensation is in the form of additional data samples the user
must provide.

Let us explain the intuition behind this with an example.  Imagine a
certain researcher $A$ would like to perform a single tail z-test
with known standard deviation $\sigma=10$, significance level $0.05$
and power $0.9$ at effect size $\Delta=1$.
Say the database contains $857$ samples which is exactly
enough to perform this test.
However, another researcher $B$ has just
completed an independent test on the same database with significance
level $0.001$ and power $0.44$. In order to control type-I errors, the database
manager tells researcher $A$ to reduce the significance
level of his test to $0.049$. Ordinarily this would require researcher
$A$ to reduce the power to some value less than $0.9$, which is undesirable.
Since researcher $A$'s problem is due to researcher $B$'s test,
researcher $B$ is asked to
compensate by adding six more samples to the database.
$863$ samples is exactly enough to allow
researcher $A$ to execute his test with significance level $0.049$ and
the original power $0.9$.

Of course requiring actual new samples from
a researcher is not always practical. Instead the required
number of samples may be translated to currency.
This amount will be
the cost researcher $B$ will be charged with for executing the test,
 and it will be used by the QPD's manager to obtain six new samples.

In the simplistic example just mentioned, the compensation requested
for the first statistical test allows for a single additional test.
The QPD's goal, however, is to implement a compensation system in
which an infinite series of tests can be executed while controlling
type-I errors.
Furthermore, we require that the  amount of compensation demanded for
a test will not be negatively influenced by the characteristics of other tests in
the series.
Thus we define two requirements that such a system should fulfill:
\textit{stability} and \textit{fairness}.
Stability means
that the cost charged for a particular test should always
be sufficient to compensate for all possible future tests. In other words, it should not adversely affect the costs of future tests.
The fairness property requires that users that request more difficult
tests (e.g., with higher power demands) will be assigned higher costs.
We define these requirements formally below.

Thus the QPD can be viewed as a mechanism
for distributing the costs of maintaining a public database
fairly between the consumers of this data. In any active research domain
new samples must be continually obtained or
false discoveries will start accumulating unchecked. Thus it is fair to
request the researchers to participate in financing this task,
an amount proportional to the volume of their activity, as does the
QPD. In subsequent sections we demonstrate that employing Generalized Alpha
Investing with a QPD reduces these costs to practically zero if
properly used.

\subsection{Formal Definition of the QPD}
\label{subsec_qpd}


A QPD~\citep{qpd2010} serves a series of test requests.
Each request includes the following information: The test statistic,
assumptions on the distribution of the data (e.g. that it is normally
distributed), the desired effect size and power requirements.
The request does {\em not} contain two details: the
required significance level and the number of samples which will be
used to calculate the test statistic. These two details
are managed by the QPD's manager. However, given all the other request
details the significance level becomes a function of the number of
samples. We term this function the ``level-sample'' function, formally
defined below.
\begin{definition}
Given a test request containing test statistic, data assumptions,
and the desired effect size and power requirements, the
{\em Level-sample} function, $L:\mathbb{N}\to \mathbb{R}$, is a function specifying the
feasible significance level given the number of samples.
\end{definition}

The level-sample function summarizes the test request for the sake of
determining the cost required for executing it. The costs assigned for
requests may vary between different requests, and also may vary with
time, but the allocation scheme must fulfill two properties: {\em fairness} and {\em stability}.
The fairness requirement essentially says that easier
tests (e.g., those that require less power),
should have lower costs. The stability requirement dictates that costs
will not diverge with time, i.e., that some constant bound may be
precomputed for each type of request. The two are formally defined below.

\begin{definition}
The costs assigned by the QPD satisfy the {\em fairness} requirement
if for any two requests such that one has level-sample function
$L_a(n)$ and the other $L_b(n)$, and $\forall n : L_a(n)<L_b(n)$,
then at any particular point in time the first request will be
assigned with no higher cost than the second request.
\label{def_fairness}
\end{definition}
\begin{definition}
The costs assigned by the QPD satisfy the {\em stability} requirement
if for any particular request $a$ there is some constant $c_a$ such
that the cost assigned to it will never exceed $c_a$.
\label{def_stability}
\end{definition}

\begin{definition}
A Quality Preserving Database (QPD) is a database with a management
layer that assigns costs in the form of additional data samples for
each test executed. This layer fulfills the following three properties:
(a) It can serve an infinite series of requests,
(b) It satisfies the fairness and stability requirements,
(c) It controls some measure of the overall type-I errors (e.g.,
  $FWER$) at some pre-configured level $\alpha$.
\label{qpd_essentials}
\end{definition}

Definition~\ref{qpd_essentials} can be more simply stated as follows: The QPD allows users to perform unlimited tests with whatever power they choose. Their requirements will be fulfilled at a cost that reflects the difficulty of the request. The cost required from one user will not be adversely affected by the activity of previous users (but in practice we typically observe a gradual cost decrease, as shown in Section~\ref{sec_qpd_simulation}).

In~\cite{qpd2010} two possible implementations of a QPD are
presented, termed 'persistent' and 'volatile'.
The persistent version uses Alpha Spending to control
FWER. We now describe it, and in the
next subsection show how Alpha Spending can be replaced by Alpha
Spending with Rewards. The volatile version will not be discussed in
this paper.

Let $W(j)$ be the $\alpha$-wealth remaining after the $j$`th
test.
$W(0)$ is the initial $\alpha$-wealth set by the QPD
manager, which guarantees $FWER\leq W(0)$.
Let $c_j$ be the cost assigned to the $j$'th test, i.e., $c_j$ is the
number of additional samples required prior to executing the $j$'th test.
Let $n_j$ be the number of samples the database has after the
$j$'th test, i.e., $n_j=n_0+\sum_{k=1}^{j} c_k$, where $n_0$ is the
initial number of samples.

The crux of the implementation is the way the level of the test
$\alpha_j$ and the cost assigned for executing it $c_j$ are calculated.
Given $c_j$, the level is computed according to the following equation:
\begin{equation}
\alpha_j=W(j-1)(1-q^{c_j}),
\label{equ_persistent_level}
\end{equation}
where $0<q<1$ is some pre-configured constant.
Equation~\ref{equ_persistent_level} states that the level allocated
for the $j$'th test is fraction $1-q^{c_j}$ of the available
$\alpha$-wealth, $W(j-1)$, thus the higher the cost $c_j$, the higher the
level that is allocated.

Since $W(j)=W(j-1)-\alpha_j$, we obtain $W(j)=W(j-1) q^{c_j}$. From
this it follows by induction that the remaining $\alpha$-wealth is
exponentially decaying with the number of samples: $W(j)=\alpha q^{n_j-n_0}$.

Let $L_j:\mathbb{N}\to \mathbb{R}$ be the level sample function
of the $j$'th test. The cost $c_j$ assigned for this test
is calculated to be the minimal value that satisfies equation~\ref{persistent}.
\begin{equation}
L_{j}(n_{j-1}+c_j)\leq W(j-1)(1-q^{c_j}).
\label{persistent}
\end{equation}
This guarantees
that the level allocation for this test $W(j-1)(1-q^{c_j})$ will be
exactly enough to execute it given the existing number of samples and
the additional ones that will be obtained $n_{j-1}+c_j$.

Let us denote this particular implementation based on Alpha Spending
by $\QPDAS(\alpha,q)$. The
following definition summarizes this approach.
\begin{definition}
$\QPDAS(\alpha,q)$ is a QPD that uses Alpha Spending, with $W(0)=\alpha$,
  costs calculated as the minimal values that satisfy
  equation~\ref{persistent} and level allocations according to
  equation~\ref{equ_persistent_level}.
\end{definition}

\begin{theorem}
$\QPDAS(\alpha,q)$ fulfills the three properties
of Definition~\ref{qpd_essentials}, where stability is only guaranteed for
  requests such that their level-sample
function $\LN$ satisfies $L(n)\leq b q^n$ for some $b$.
\label{theorem_qpd_is_qpd}
\end{theorem}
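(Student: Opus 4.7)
The plan is to verify the three properties of Definition~\ref{qpd_essentials} in turn, leveraging the fact that $\QPDAS(\alpha,q)$ is just Alpha Spending with a particular cost/level coupling. Property (c) should be the easiest: since the update is $W(j) = W(j-1) - \alpha_j$ and the procedure enforces $W(j)\geq 0$, the sum $\sum_j \alpha_j$ never exceeds $W(0)=\alpha$, so the standard Bonferroni/union-bound argument for Alpha Spending gives $FWER\leq \alpha$ uniformly in the number of tests.

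For property (a), I would argue that for every request the minimization in equation~\ref{persistent} admits a finite solution. Substituting the closed form $W(j-1)=\alpha q^{n_{j-1}-n_0}$ into the inequality, the right-hand side $W(j-1)(1-q^{c_j})$ is strictly increasing in $c_j$ and approaches $W(j-1)>0$ as $c_j\to\infty$, while typical level-sample functions $L_j(n)$ are non-increasing with $n$ and tend to $0$; hence some finite $c_j\in\mathbb{N}$ satisfies the inequality, and the procedure never halts. For fairness (first half of (b)), suppose $L_a(n)<L_b(n)$ for all $n$ and both requests are issued at step $j$; then any $c$ that satisfies $L_b(n_{j-1}+c)\leq W(j-1)(1-q^c)$ automatically satisfies the analogous inequality for $L_a$, so the minimal feasible $c$ for request $a$ is no larger than that for request $b$. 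This is purely a monotonicity observation.

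The substantive part is stability under the hypothesis $L(n)\leq b q^n$. Here I would plug the bound into equation~\ref{persistent} and use $W(j-1)=\alpha q^{n_{j-1}-n_0}$ to derive, after the $q^{n_{j-1}}$ factors cancel on both sides, the clean condition
\begin{equation}
b\, q^{n_0}\, q^{c_j} \;\leq\; \alpha\bigl(1-q^{c_j}\bigr),
\end{equation}
which rearranges to $q^{c_j}\leq \alpha/(\alpha + b q^{n_0})$. Taking logarithms base $1/q$ shows the minimal feasible $c_j$ is bounded by the constant
\begin{equation}
c_a \;=\; \left\lceil \log_{1/q}\!\left(1+\frac{b\, q^{n_0}}{\alpha}\right)\right\rceil,
\end{equation}
which depends only on the request (through $b$) and on the fixed parameters $\alpha,q,n_0$, but crucially not on $j$ or on the history of previous requests. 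This establishes the stability bound.

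The main obstacle I anticipate is the step where the $q^{n_{j-1}}$ factors must cancel on both sides of the stability inequality: this is precisely why the exponential form $L(n)\leq bq^n$ with the same base $q$ appears in the hypothesis, and it is what makes the per-request cost uniformly bounded rather than drifting upward as the wealth $W(j-1)$ decays. Being careful about the integrality of $c_j$ (since it represents a number of samples) and about edge cases like a request whose level-sample function does not decay to zero—in which case (a) could fail—are minor technical points worth flagging, but should not affect the main argument.
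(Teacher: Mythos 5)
Your proposal is correct and takes essentially the same route as the paper: the paper defers this theorem to the cited QPD reference, but its parallel in-text proof of Theorem~\ref{theorem_qpdr_is_qpd} uses exactly your three steps---positive wealth plus the fact that a large enough $c_j$ satisfies equation~\ref{persistent} for property (a), monotonicity of equation~\ref{persistent} for fairness, and substitution of $L(n)\leq bq^n$ together with $W(j-1)=\alpha q^{n_{j-1}-n_0}$ for stability, yielding the constant $\log_q\bigl(\frac{\alpha q^{-n_0}}{b+\alpha q^{-n_0}}\bigr)$, which equals your $\log_{1/q}\bigl(1+\frac{bq^{n_0}}{\alpha}\bigr)$. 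The FWER argument is the standard Alpha Spending union bound the paper also invokes, so no gaps remain.
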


Proof of Theorem~\ref{theorem_qpd_is_qpd} is given in~\cite{qpd2010}.
Note it fulfills the stability requirement only for tests
whose level-sample function decays
exponentially. This is true for  a wide range of commonly used
statistical tests, among them Neyman-Pearson tests, tests about the
mean of a normal distribution with known variance, and single-tail,
uniformly most powerful tests  with one unknown
parameter~\citep{qpd2010}. Simulations show that many more types of tests are
applicable in practice, e.g., tests based on an approximately normal
distribution such as a t-distribution. Finally, it is worth
emphasizing that even if some type of test does not satisfy stability,
it only means that the costs associated with such tests
may theoretically increase with time, but all the other properties of the
QPD remain. In particular, false discovery control is guaranteed for
all type of tests~\citep{qpd2010}.

\subsection{The Quality Preserving Database with Alpha Investing}
\label{subsec_qpdalpha}
We proceed to show how to combine the ideas of
the QPD and Generalized Alpha Investing.

Since the $\QPDAS(\alpha,q)$ uses the Alpha Spending procedure,
it would be simplest to replace the Alpha Spending procedure with the
Alpha Spending with Rewards variant of Generalized Alpha Investing, as
in Definition~\ref{def_asr}.
For simplicity of notation, we use $k=1$, though it will be clear
the results are valid for any $k$.

Let us denote by $\QPDASR(\alpha,\eta,q)$ that novel combination of
Alpha Investing with Rewards and the QPD.
We use the same equation~\ref{equ_persistent_level}
 for calculating the level $\alpha_j$
allocated for the $j$'th test,
and also calculate the costs in the same way using equation~\ref{persistent}.
However, the $\alpha$-wealth is initialized and updated according
to equations~\ref{equ_asr_update} of
Alpha Spending with Rewards. Thus,
we have that $\QPDASR(\alpha,\eta,q)$
controls $mFDR_{\eta}$ at level $\alpha$. Recall that
choosing $\eta=1-\alpha$ implies weak control over $FWER$ at level
$\alpha$.

The following definition and theorem formalize the above description,
and prove this is indeed a QPD.

\begin{definition}
$\QPDASR(\alpha,\eta,q)$ is a QPD that uses Alpha Spending with Rewards,
with $\alpha$-wealth initialized and updated according
to equation~\ref{equ_asr_update},
  costs calculated as the minimal values that satisfy
  equation~\ref{persistent} and level allocations according to
  equation~\ref{equ_persistent_level}.
\end{definition}

\begin{theorem}
$\QPDASR(\alpha,\eta,q)$ fulfills the three properties
of Definition~\ref{qpd_essentials}, where stability is only guaranteed for
  requests such that their level-sample
function $\LN$ satisfies $L(n)\leq b q^n$ for some $b$,
and type-I error control requires Assumption~\ref{ass_ind}.
\label{theorem_qpdr_is_qpd}
\end{theorem}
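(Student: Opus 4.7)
The plan is to piggy-back on Theorem~\ref{theorem_qpd_is_qpd}: since $\QPDASR(\alpha,\eta,q)$ uses exactly the same cost formula~\ref{persistent} and level formula~\ref{equ_persistent_level} as $\QPDAS(\alpha,q)$, and differs only in the initialization ($W(0)=\alpha\eta$ rather than $\alpha$) and in the update rule of equations~\ref{equ_asr_update}, which now adds a non-negative reward $\ddj$ upon rejection, the three QPD properties should follow from the corresponding arguments of~\cite{qpd2010} after checking that allowing rewards does not destroy the key lower bound on $W(j-1)$.

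First I would dispatch properties (a) and (c). Combining equations~\ref{equ_asr_update} and~\ref{equ_persistent_level} gives $W(j)=W(j-1)q^{c_j}+R_j\ddj\geq W(j-1)q^{c_j}>0$, so the wealth never depletes and every future request can be assigned a finite cost, giving (a). For (c), $\QPDASR(\alpha,\eta,q)$ is by construction an instance of Alpha Spending with Rewards (with $k=1$), so the corollary that follows Proposition~\ref{prop_asr_is_gai} yields $mFDR_\eta\leq\alpha$ under Assumption~\ref{ass_ind}; one only has to observe that the reward $\ddj$ prescribed by the scheme can be chosen inside the admissible range of Definition~\ref{def_asr}, which reduces to a constraint depending on the request's best power $\ppj$.

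Property (b) is the main substantive step. Fairness is essentially automatic, because equation~\ref{persistent} depends only on $W(j-1)$ and on the candidate level-sample function, so at any fixed time the request with pointwise smaller $L$ is assigned the pointwise smaller minimal $c$. For stability, the key observation is that rewards can only increase the pool relative to pure spending, so by induction on $j$, using $\ddj\geq 0$,
\begin{equation}
W(j)\;\geq\;\alpha\eta\,q^{n_j-n_0}.
\end{equation}
Feeding this lower bound into the cost-bounding calculation from the proof of Theorem~\ref{theorem_qpd_is_qpd}, the requirement $L(n_{j-1}+c_j)\leq W(j-1)(1-q^{c_j})$ with $L(n)\leq b\,q^n$ reduces to $q^{c_j}(b\,q^{n_0}+\alpha\eta)\leq\alpha\eta$, whose minimal solution $c_j$ is bounded by a constant depending only on $b$, $\alpha$, $\eta$, $q$ and $n_0$, as required for stability.

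The main obstacle, such as it is, is simply to verify that rescaling the initial wealth from $\alpha$ to $\alpha\eta$ does not alter the qualitative conclusion of the stability bound (it does not, since the bound still has the form $c_j\leq C_a$ with $C_a$ determined by the request alone) and to confirm that the reward assigned by $\QPDASR$ lies within the admissible range of Definition~\ref{def_asr}. Once these two points are settled, the arguments of~\cite{qpd2010} transfer verbatim and the three QPD properties hold.
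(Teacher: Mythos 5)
Your proposal is correct and follows essentially the same route as the paper: the inductive lower bound $W(j)\geq\alpha\eta\,q^{n_j-n_0}$ you derive is exactly the paper's Lemma~\ref{lem_qpdr_superior}, your stability reduction $q^{c_j}(b\,q^{n_0}+\alpha\eta)\leq\alpha\eta$ is algebraically the same constant $c^\star$ used in the paper, and properties (a), fairness, and (c) are dispatched just as in the paper's proof via positivity of the wealth, equation~\ref{persistent}, and Proposition~\ref{prop_asr_is_gai} together with Theorem~\ref{theorem_gai_controls_mfdr} under Assumption~\ref{ass_ind}.
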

\begin{proof}
From Lemma~\ref{lem_qpdr_superior} it follows that
$\QPDASR(\alpha,\eta,q)$ always maintains a positive
$\alpha$-wealth. It is clear that given a positive $\alpha$-wealth,
it is always possible to satisfy
equation~\ref{persistent} by choosing a large enough $c_j$.
Therefore, $\QPDASR$ can serve an infinite series of requests.

Fairness immediately follows from equation~\ref{persistent}.

Stability can be shown by repeating the proof of stability in~\cite{qpd2010}:
Let $\LN$ be a ``level-sample''
function of a request satisfying $\forall n: L(n)\leq bq^n$ and
$c^\star=log_q(\frac{\alpha\eta q^{-n_0}}{b+\alpha\eta q^{-n_0}})$
($c^\star$ is a positive number for $0<q<1$ and $b>0$).
The following series of inequalities again uses Lemma~\ref{lem_qpdr_superior}.
\begin{eqnarray}
\begin{array}{rcl}
L_j(n_{j-1}+c^\star)
&\leq &b q^{n_{j-1}+c^\star}
\;=\; b q^{n_{j-1}} \frac{\alpha\eta q^{-n_0}}{b+\alpha\eta q^{-n_0}}
\;=\;  \alpha\eta q^{n_{j-1}-n_0} \frac{b}{b+\alpha\eta q^{-n_0}}\\
&\leq&W(j-1) \frac{b}{b+\alpha\eta q^{-n_0}}
\;=\; W(j-1) (1-\frac{\alpha\eta q^{-n_0}}{b+\alpha\eta q^{-n_0}})
\;=\;  W(j-1) (1-q^{c^\star}).
\end{array}
\end{eqnarray}
Hence equation~\ref{persistent} is satisfied for $c^\star$, which
proves it is an upper bound on the
cost associated with this request.

The third property is fulfilled by controlling $mFDR_\eta$ at level
$\alpha$, which is implied by Proposition~\ref{prop_asr_is_gai},
Theorem~\ref{theorem_gai_controls_mfdr} and Assumption~\ref{ass_ind}.\qed
\end{proof}

Controlling a less conservative measure must result in some
gain. Usually, this gain is in the form of increased power. However in
the QPD scheme power is specified by the user's request and the level
allocation is carefully calculated to support this power specification exactly.
Hence here we expect the gain to be in the form of reduced costs.
Every time a hypothesis is rejected a large amount is added to the
potential function $W(j)$. This results in a sharp decrease in
the costs assigned by equation~\ref{persistent}.

This simple implementation is pessimistic: while it exploits the additional
potential received from rejected null hypotheses, it never
expects any of it, and always allocates level as if no more
$\alpha$-wealth will ever be obtained.

A more optimistic implementation may be employed to obtain even
smaller costs. Let us term this implementation $\QPDASROPT(\alpha,\eta,q)$.
In this implementation we
present the $\alpha$-wealth $W(j)$ as the sum of two pools of wealth:
$W(j)=A(j)+B(j)$. The first pool, $A(j)$, is the same
$\alpha$-wealth that a $\QPDAS(\alpha \eta,q)$ would have. The second pool, $B(j)$
is the additional wealth obtained by rejecting hypotheses.

When a test request is served its level allocation is composed of two
parts. The amount withdrawn from the $A(j)$ pool, is the same as in
the $\QPDAS$, i.e., $A(j-1)(1-q^c_j)$. As in the $\QPDAS$, this amount
depends on the cost $c_j$.

The $B(j)$ pool, on the other hand, is managed more openhandedly, knowing it
will be refilled with $\alpha$ every time a rejection occur.
At each step we estimate the probability of this happening based
on prior history
$p(j)=\frac{1}{j-1} \sum_{i=1}^{j-1} R_j$,
and withdraw the following amount for the $j$'th test:
$min(p(j) \alpha, B(j-1))$.
Albeit somewhat heuristic, this management of the $B(j)$ pool does
have the following property. If we may assume a real probability of rejection
$p$, then the expected amount withdrawn at the $j$'th step from $B(j)$
is $E(min(p(j)\alpha,B(j-1)))\leq E(p(j)\alpha)=p\alpha$. The expected
reward of the $j$'th test is
also $p\alpha$. Therefore the $B(j)$ pool is kept in a balanced
manner. Note that the amounts withdrawn from it are irrespective of
$c_j$, therefore can be thought of
as 'free of charge'.

The following equations formally define the level allocation of the
$j$'th test.
\begin{equation}
\begin{array}{rcl}
p(j) & = & \frac{1}{j-1} \sum_{i=1}^{j-1} R_i\\
\alpha_j & = & A(j-1)(1-q^c_j) +min(p(j)\alpha,B(j-1)).
\end{array}
\label{equ_qpdropt_level}
\end{equation}

As in the $\QPDAS$, the cost is calculated to be the minimal value such
that the level allocation is just enough to execute the test.
\begin{equation}
L_j(n_{j-1}+c_j)\leq \alpha_j.
\label{equ_qpdropt_cost}
\end{equation}

Now we can formally define the initialization and update rules of the
$\alpha$-wealth pools.
\begin{equation}
\begin{array}{rclrcl}
A(0) &=& \alpha\eta &
B(0)&=&0\\
A(j)&=&A(j-1)q^{c_j} &
B(j)&=&B(j-1)-min(p(j) \alpha, B(j-1))+\alpha R_j\\
W(j)&=&A(j)+B(j). &&&
\end{array}
\label{equ_qpdropt_update}
\end{equation}

The following definition summarizes the $\QPDASROPT(\alpha,\eta,q)$.
\begin{definition}
$\QPDASROPT(\alpha,\eta,q)$ is a QPD that uses Alpha Spending
with Rewards,
with $\alpha$-wealth initialized and updated according
to equations~\ref{equ_qpdropt_update},
costs calculated as the minimal values that satisfy
equation~\ref{equ_qpdropt_cost} and level allocations according to
equations~\ref{equ_qpdropt_level}.
\end{definition}

\begin{theorem}
$\QPDASROPT(\alpha,\eta,q)$ fulfills the three properties
of Definition~\ref{qpd_essentials}, where stability is only guaranteed for
  requests such that their level-sample
function $\LN$ satisfies $L(n)\leq b q^n$ for some $b$,
and type-I error control requires Assumption~\ref{ass_ind}.
\label{theorem_qpdropt_is_qpd}
\end{theorem}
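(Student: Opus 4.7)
The plan is to mirror the structure of the proof of Theorem~\ref{theorem_qpdr_is_qpd}, since the only real difference between $\QPDASR$ and $\QPDASROPT$ is the bookkeeping split $W(j)=A(j)+B(j)$: the conservative pool $A(j)$ behaves exactly like the $\alpha$-wealth of $\QPDAS(\alpha\eta,q)$, while the opportunistic pool $B(j)$ is refilled by rewards from rejections and drained by at most $p(j)\alpha$ per step, clipped to stay non-negative. I will verify the three QPD properties in turn and save the $mFDR$ argument for last, where the main technical work lies.

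For property (a), I would first observe that $A(j)=\alpha\eta\, q^{n_j-n_0}>0$ for every finite $j$, since $A(0)=\alpha\eta>0$ and each update only multiplies $A$ by the positive factor $q^{c_j}$. The $\min$-clipping in equation~\ref{equ_qpdropt_update} ensures $B(j)\geq 0$. Hence $\alpha_j$ as given by equation~\ref{equ_qpdropt_level} is a continuous, strictly increasing function of $c_j$ with range $[\min(p(j)\alpha,B(j-1)),\,A(j-1)+\min(p(j)\alpha,B(j-1))\,)$ that can be made arbitrarily close to $A(j-1)+B(j-1)>0$ by taking $c_j$ large, so equation~\ref{equ_qpdropt_cost} admits a finite minimizer for every request; an infinite series of requests can therefore be served.

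For property (b), fairness is immediate: for fixed history and fixed $j$, the right-hand side of equation~\ref{equ_qpdropt_cost} depends on $c_j$ only through $A(j-1)(1-q^{c_j})$ (the $B$-term is constant in $c_j$), and this dependence is identical across requests; thus if $L_a(n)\leq L_b(n)$ pointwise, the minimal $c_j$ satisfying~\ref{equ_qpdropt_cost} for $L_a$ is no larger than that for $L_b$. For stability, I reuse the bound $c^\star=\log_q\!\bigl(\frac{\alpha\eta q^{-n_0}}{b+\alpha\eta q^{-n_0}}\bigr)$ from the proof of Theorem~\ref{theorem_qpdr_is_qpd}: the inequality chain there shows $L_j(n_{j-1}+c^\star)\leq A(j-1)(1-q^{c^\star})$, so a fortiori $L_j(n_{j-1}+c^\star)\leq A(j-1)(1-q^{c^\star})+\min(p(j)\alpha,B(j-1))$, meaning $c^\star$ is still an upper bound on the assigned cost.

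Property (c) is the main obstacle, since the update rule and the data dependence of $p(j)$ obscure the connection to the Alpha Spending with Rewards framework. My plan is to collapse the split-pool dynamics into a single equivalent update. Summing the recursions in~\ref{equ_qpdropt_update} gives
\begin{equation}
W(j) \;=\; A(j-1)q^{c_j}+B(j-1)-\min(p(j)\alpha,B(j-1))+\alpha R_j
\;=\; W(j-1)-\alpha_j + \alpha R_j,
\end{equation}
with $\alpha_j$ precisely the level in equation~\ref{equ_qpdropt_level} and $W(0)=\alpha\eta$. This matches the Alpha Spending with Rewards update~\ref{equ_asr_update} with $k=1$ and reward $\ddj=\alpha$. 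The admissibility constraint in Definition~\ref{def_asr} for $k=1$ reduces to $\ddj\leq\min(\alpha_j/\ppj+\alpha,\,\alpha)$, which is satisfied (with equality in the second argument) by $\ddj=\alpha$. Non-negativity of $W(j)$ follows from $A(j)>0$ and $B(j)\geq 0$ already established. Hence $\QPDASROPT(\alpha,\eta,q)$ is an instance of Alpha Spending with Rewards, and by Proposition~\ref{prop_asr_is_gai} and Theorem~\ref{theorem_gai_controls_mfdr} under Assumption~\ref{ass_ind} it controls $mFDR_\eta$ at level $\alpha$. The subtle point worth emphasizing in the write-up is that although $p(j)$ is a data-dependent quantity, it only affects the \emph{cost} side $\alpha_j$, and the Alpha Spending with Rewards rule already permits $\alpha_j$ to depend arbitrarily on the rejection history, so no additional argument is required.
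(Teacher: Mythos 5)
Your proposal is correct and takes essentially the same route as the paper's own proof: it uses $A(j)=\alpha\eta\, q^{n_j-n_0}$ and $B(j)\geq 0$ to keep the wealth positive, reuses the fairness/stability argument of Theorem~\ref{theorem_qpdr_is_qpd} via $c^\star$, and collapses the two-pool dynamics to $W(j)=W(j-1)-\alpha_j+\alpha R_j$ with $W(0)=\alpha\eta$, so that the scheme is an instance of Alpha Spending with Rewards and Proposition~\ref{prop_asr_is_gai}, Theorem~\ref{theorem_gai_controls_mfdr} and Assumption~\ref{ass_ind} yield $mFDR_\eta$ control. The only (immaterial) slip is that the level tends to $A(j-1)+\min(p(j)\alpha,B(j-1))$ as $c_j\to\infty$, not to $A(j-1)+B(j-1)$, which does not affect the argument.
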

\begin{proof}

As in the $\QPDAS$, it is easy to show by induction that
$A(j)=\alpha\eta q^{n_j-n_0}$. Since $B(j)\geq 0$, it follows that
$W(j)\geq \alpha\eta q^{n_j-n_0}$. Hence $\QPDASROPT$ maintains a
positive $\alpha$-wealth
which allows it to serve an infinite series of
requests.

Stability and fairness follow, using the same argument as in the proof
of Theorem~\ref{theorem_qpdr_is_qpd}.

From equations~\ref{equ_qpdropt_update} and~\ref{equ_qpdropt_level} we
may write $W(j)=W(j-1)-\alpha_j+R_j \alpha$ and $W(0)=\alpha\eta$.
Hence $\QPDASROPT$
indeed implements Alpha Spending with Rewards, and therefore, under
Assumption~\ref{ass_ind}, controls $mFDR_\eta$ at level $\alpha$.\qed
\end{proof}

\subsection{Practical Considerations of QPD Usage}

The QPD notion may raise concerns that it is unfeasible in practice. Some of these concerns have already been discussed in~\cite{qpd2010}, but we'd like to revisit this issue here again, and expand on it. In particular, we address in this subsection three key questions: Is the formalism of requiring users to specify power and effect-size of their tests too restraining? Can researchers be convinced to pay for performing statistical tests? Can an example be given of a particular research community, where a QPD can realistically be thought of as being accepted and used?


Regarding the first question, one might argue that most researchers cannot specify in advance either power or effect-size. They simply perform a statistical test, e.g., a t-test, and if the p-value is lower than some acceptable threshold, they declare finding a significant result.
The problem is that usually the significance threshold cannot be properly selected, since the researchers don't know how many other research groups are testing similar hypotheses on the same data. In cases where the threshold is corrected for multiple testing, the power of the tests may drop so low that no reasonable effect could be detected.

The QPD relieves the researcher from the need to worry about p-values and thresholds. These are protections against type-I errors, which are centrally managed by the QPD. Instead, the researchers should consider and balance quantities that are much more of interest to them: how much will it cost to perform a test, and how likely is the test to discover the effect size they hope or expect to discover.

Note that a researcher need not come up with power and effect-size requirement of her own in advance. Instead the QPD manager can offer her various trade-off options. For example, he may offer to perform a t-test with power 0.9 for effect-size 0.5 for free, or power 0.99 for effect-size 0.5 for the cost \$50. This gives the researcher an idea of what her test is capable of performing, and of course she is free to inquire what can be obtained for other costs, or how the power is measured for other effect-sizes.
This allows the researcher to put aside p-values and type-I errors, and focus on what she aims to find, and plan a test that is likely to find it.

The second issue is the requirement that researchers pay for performing
their statistical tests. The idea that scientific research requires funding --- for data collection, lab work, even paying for publication in some journals --- is well engrained in the scientific research community,
and we believe that viewing the statistical testing on public data as an additional aspect
that needs to be funded is not a major leap of faith. This is particularly
true since paying for QPD usage is typically (in some cases provably as we
show in \cite{qpd2010}) cheaper than the cost of collecting data
required for testing the researcher's hypothesis  independently. As we show
in Section~\ref{sec_qpd_simulation}, in realistic scenarios QPD usage may in fact be free for most users,
and cheap for others, especially with the enhancements we presented in this
paper.

Because QPD represents an efficient use of the data for scientific research, another more radical view on the cost issue could be that funding agencies (like NIH) would prefer to take on the cost of creating and managing a QPD themselves, rather than funding the data collection efforts of individual scientists. In this scenario, the scientists would submit proposals for QPD usage instead of funding requests as is the common practice today.
To support continuous research, the funding agency may supply a constant flow of samples into the database, or have access to "batches" of samples. It is straightforward but beyond the scope of this paper to design schemes that could manage tests while allowing for such modes of data collection.

One more potentially problematic aspect of QPD is the separation it enforces between the researcher and the data being used for testing, limiting opportunities for exploratory data analysis, data cleaning etc. These concerns can be ameliorated in two ways: if the QPD managers are widely accepted as capable and reliable to perform these tasks themselves, or if QPD managers completely trust researchers to make use of the database only as designated, in which case it can be shared with them.

We conclude this subsection with a detailed example where we think a QPD can be successfully implemented. Genome wide association studies (GWASs)
are a common practice for finding relations between genetic information (genotype) and traits of organisms (phenotype).
In a single GWAS an order of 1 million single nucleotide polymorphisms (SNPs) on the genome are tested for association with a particular phenotype, e.g., a certain disease.

Due to the vast amount of hypotheses tested, as well as other issues, such as noise in sampling and stratification, type-I errors are very abundant in GWASs.
Because of that, leading journals require major results found on one dataset, to be replicated on a different, independent dataset, prior to publication~\citep{kruglyak1}. This is required regardless of the strength of statistical evidence in the initial study.

Because of the costs and organizational difficulties associated with data collection for GWAS, a common practice is to form consortia that join forces to fund and perform multiple GWASs on one or more phenotypes. Examples include the Wellcome Trust Case Control Consortium~\citep{wellcome2}, the Alzheimer's Disease Genetics Consortium~\citep{naj1}, and many others. The sizes of these consortia vary greatly, from two research groups to entire research communities.

This is an ideal starting point for a QPD. We envision the following scenario. An entire  community of researchers of a particular disease, e.g., type 2 diabetes (T2D), will set up a  {\em QPD replication consortium} to play the role of the independent dataset for replication testing. They will collect an initial amount of, say 500 cases (i.e., people having the disease) and 500 controls (healthy individuals), with genomic data for each individual, following standard GWAS practice. Protocols can be developed which will control how the QPD size is increased with usage, following our schemes proposed above.

The data of the QPD will not be made available to researchers, who will instead perform their research on their own (or their GWAS consortium's) legacy data, as is the common practice today~\citep{hardy1}. The QPD will only serve for replication in case the researchers wish to declare findings. This is aligned both with the requirement for independent data for replication testing, as well as the requirement to carefully control all tests performed on the QPD.

Say a research group believes it has found a list of ten SNPs associated with T2D. This may be due to an independent GWAS the group performed, or by using other means, e.g., study of genetic pathways. The QPD will make an ideal candidate for replicating these results, and giving the final approval for their correctness. Its data is hidden, and therefore the results are truly independent. Only the ten suspected SNPs should be tested, so the cost will be low. Finally, the QPD's type-I error control will give a strong protection against false discovery across the community, while maintaining reasonable power.

\section{Simulation Results}
\label{sec_qpd_simulation}

We have simulated the three implementations of QPD:
The original implementation from~\cite{qpd2010} that uses Alpha Spending,
$\QPDAS(0.05,0.999)$,
our new implementation with Alpha Spending with Rewards, $\QPDASR(0.05,0.95,0.999)$,
and its optimistic variant,
$\QPDASROPT(0.05,0.95,0.999)$.
All three are serving the same sequence of requests:
$100$ requests for t-test with power
$0.95$ and effect size $0.1$, simulating independently distributed
statistics.
The real probability of a true null is $0.9$.
The initial number of samples $n_0=2000$.
For each implementation we performed $1000$ realizations of this simulation.

Figure~\ref{fig_compare} depicts the results. Figure~\ref{fig_compare_a} shows the mean cost of the $i$'th test for each implementation.
The gradual cost decrease of $\QPDAS$ was already mentioned and
discussed by~\cite{qpd2010}. This results from having the levels of
the tests decay faster than assumed as the number of samples
grows.
This decay can be thought of as a reward for waiting,
since users who delay their tests risk losing the novelty of
their hypotheses, but are compensated with lower costs.
As expected, since $\QPDASR$ and $\QPDASROPT$ control $mFDR_\eta$
instead of $FWER$, they can achieve the requested power using smaller
number of samples, hence resulting with lower costs.

Figure~\ref{fig_compare_b} shows the ratio between the costs required by $\QPDASR$ and by $\QPDAS$. It shows the mean, median, and the 2.5th and 97.5th percentiles of these ratios over the $1000$ realizations. Starting after $41$ tests the costs of $\QPDASR$ drop to less than one third, on average, than the costs of $\QPDAS$, and the 97.5th percentile is at $0.7$.

Figure~\ref{fig_compare_c} shows the ratio between the costs required by $\QPDASROPT$ and by $\QPDAS$.
The reduction in cost here is much more dramatic.
The $\QPDASROPT$ cost drops to $0$ immediately after the first
rejection, and from there on, it usually remains so. This is because the $B(j)$ pool
fills up with the first reward, and is then generously distributed to
subsequent tests. Those tests get so high a level, that no further
samples are needed in the database to support the required power. Even
though the $B(j)$ pool is used generously, the probability of it
refilling before running out is so high, that frequently it is enough
to supply level allocation for all the remaining tests.
All $1000$ realizations have a cost of $0$ from the 78th test onward with $\QPDASROPT$. The 97.5th percentile drops to $0$ from the 39th test onward.

Comparing the number of rejected hypotheses, all three implementations reject the exact same number of false null hypotheses, $9.68$ on average.
Recall that
a QPD, regardless of the implementation, serves requests that specify
a power requirement, hence all tests in this simulation were performed
with power $0.95$, which explains this identical result (identical realizations of sequences of tests were served by each of the QPD variants).
The differences between implementations are in the
number of samples present at the $j$'th test, and the level allocation
$\alpha_j$. $\QPDASR$ and $\QPDASROPT$, enjoying the rewards, work with
lower number of samples and larger level allocations. These two effects
cancel each other to produce the same power, but produce more type-I
errors: $\QPDAS$ had $0.029$ type-I errors on average, $\QPDASR$ had $0.063$, and $\QPDASROPT$ had $0.116$. This is expected since $\QPDAS$ controls the $FWER$, while the other two control $mFDR$.

\begin{figure}[htb]
\centering
\subfigure[Mean costs]{
\makebox{\includegraphics[scale=0.3]{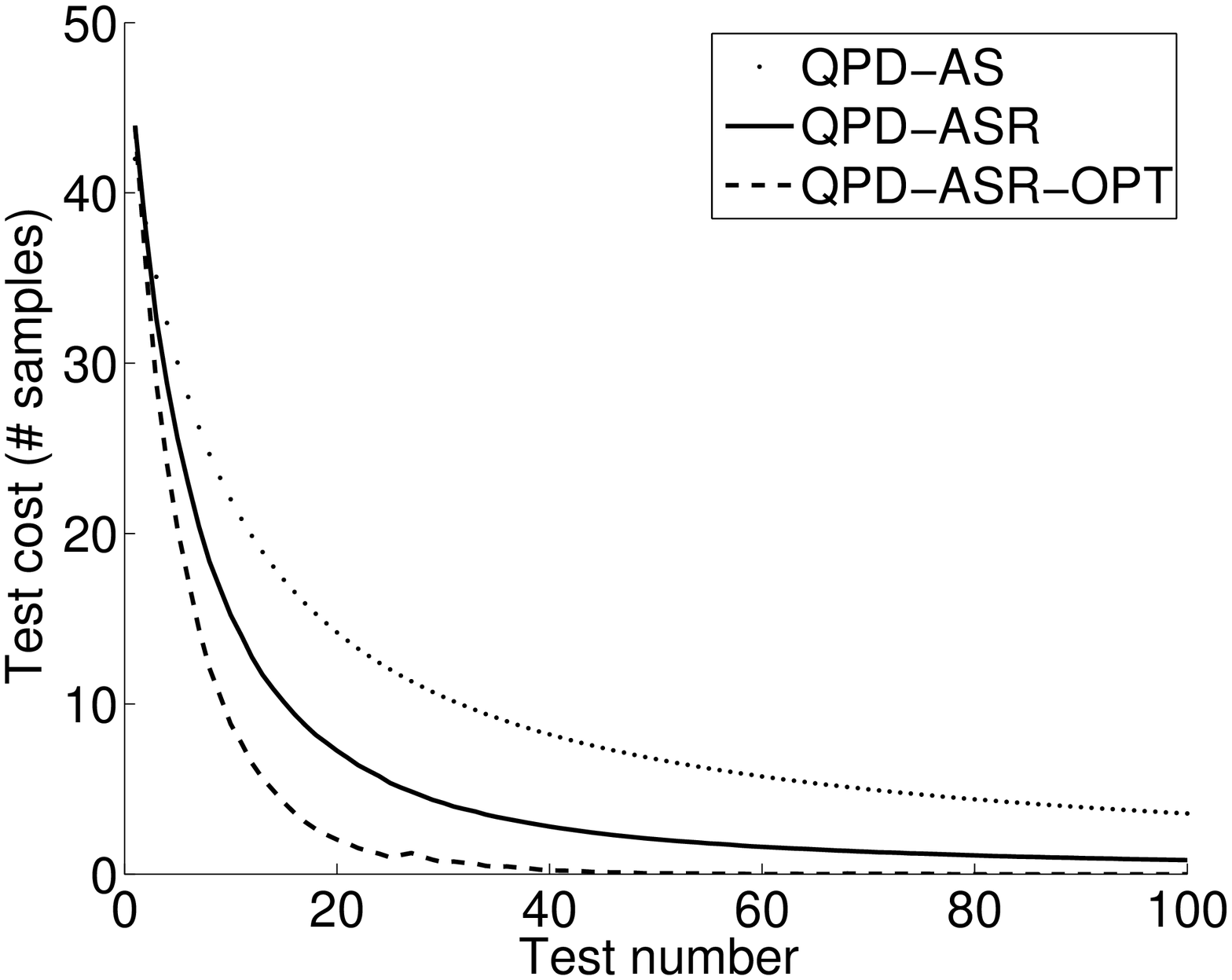}}
\label{fig_compare_a} } \subfigure[QPD-ASR vs. QPD-AS]{
\makebox{\includegraphics[scale=0.3]{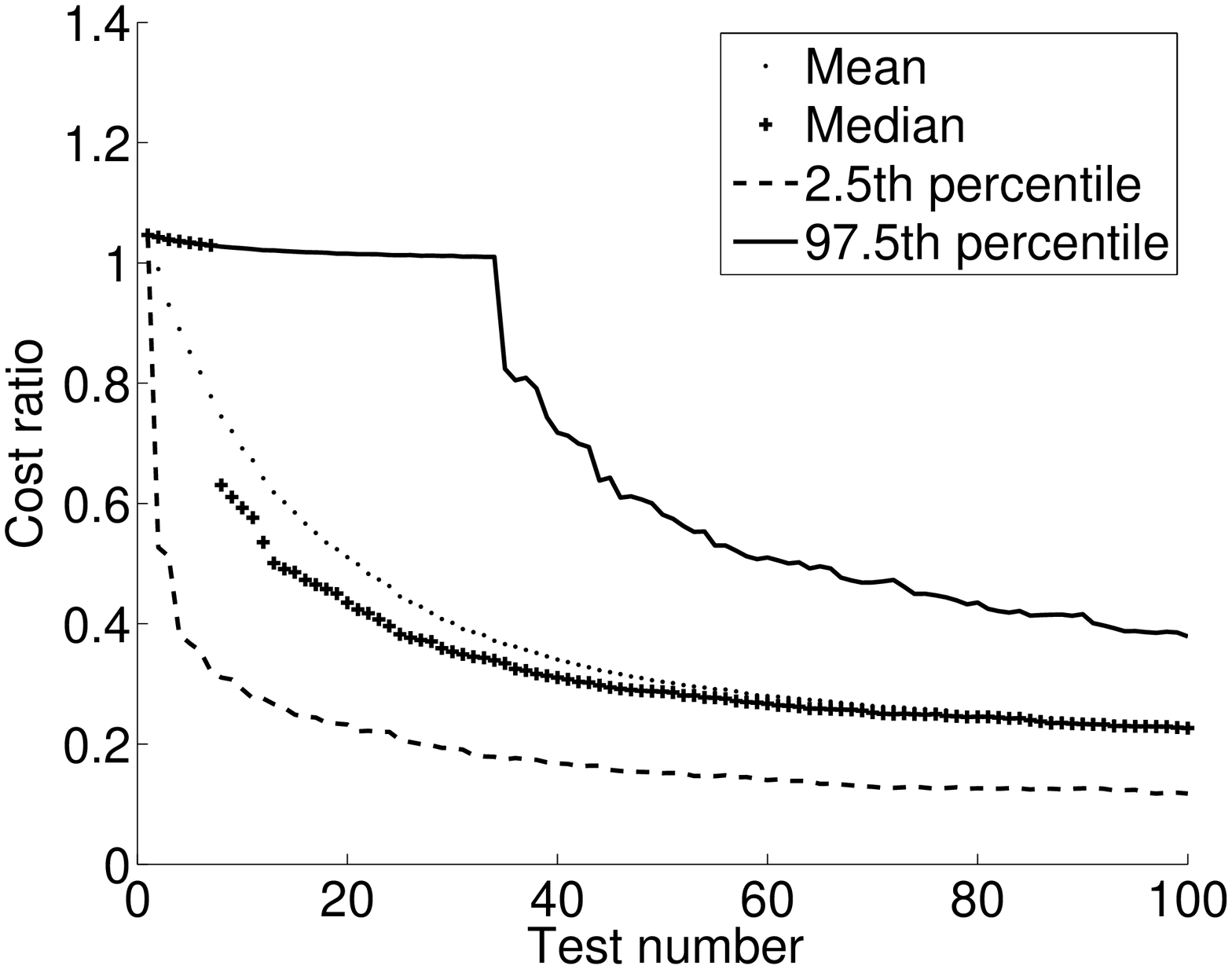}}
\label{fig_compare_b} } \subfigure[QPD-ASR-OPT vs. QPD-AS]{
\makebox{\includegraphics[scale=0.3]{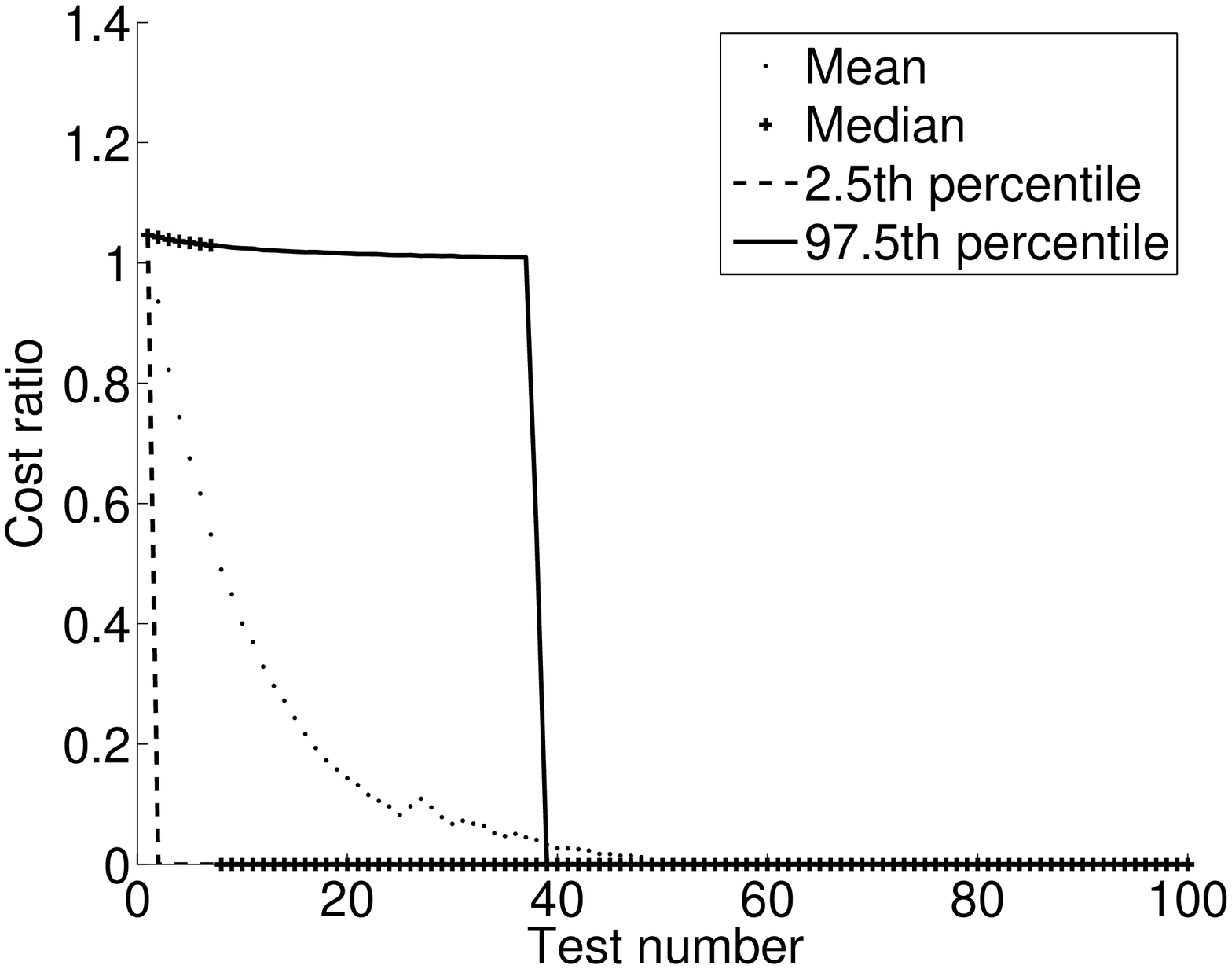}}
\label{fig_compare_c} }
\caption{\label{fig_compare}Simulation Results.
(a) shows the mean cost of the i'th test. (b) shows the mean, median, and 2.5th and 97.5th percentiles of the ratio between the cost of the i'th test in $\QPDASR$ and $\QPDAS$. (c) shows the same for $\QPDASROPT$ and $\QPDAS$.
}
\end{figure}

\section{Conclusions}
In this paper we discussed sequential procedures for controlling
false discovery, and their application to public databases.
Specifically, we concentrated on the only known sequential procedure
for controlling an $FDR$-like measure, Alpha Investing, and on the
simpler Alpha Spending that controls $FWER$. We have shown that both can
be described as special cases of a more general framework, which
offers a trade-off between the level of each test and the reward
obtained when the hypothesis is rejected.

We have further shown that from this general framework two additional,
novel procedures may be derived, which have practical benefits over
existing procedures.
The version we termed \ero~Alpha Investing
is optimal in the sense that, under certain practical assumptions, it tunes the balance between level
of the test and the reward such as to maximize the potential
for subsequent tests. We have shown it to be significantly more powerful
 than other Alpha Investing variants in practice. The second, Alpha Spending with Rewards, can be viewed as
a cross between Alpha Spending and Alpha Investing. Its practical
benefit is that systems that rely on Alpha Spending can be adapted to
work with Alpha Spending with Rewards with minimal changes, and enjoy
the additional power gained from controlling a less conservative
measure of type-I errors.

Finally, we have shown how Alpha Spending with Rewards can be
employed to control false discovery in public databases
through the framework of Quality Preserving Database (QPD). This
demonstrates our previous point, where the QPD framework, previously
relying on Alpha Spending, was seamlessly transformed to work with
Alpha Spending with Rewards, something that could not have been
done with the original Alpha Investing. In addition, this has achieved a
significant improvement to the QPD, since replacing $FWER$ with
$mFDR_\eta$ as the controlled measure resulted in costs reduced to
practically zero.

This last point is of crucial importance. It implies that a layer for
controlling false discovery such as the QPD can be implemented for
a freely accessible public database, keeping it almost freely
accessible still. The Generalized Alpha Investing procedure and
the $mFDR_\eta$ measure have some drawbacks, most notably the
required independence Assumption~\ref{ass_ind}, inherited from Alpha
Investing. Yet employing it as a method for false discovery
control for public databases comes at almost no cost, and it surely
offers a huge improvement over the current situation of no false
discovery control enforced for public databases to date. All the more
so since in the case of public databases, with different research groups
researching different aspects of the data, Assumption~\ref{ass_ind}
can often be expected to hold approximately.

\section{Appendix}

\subsection{Technical Results and Proofs}
\begin{lemma}
Given Assumption~\ref{ass_ind}, in the Generalized Alpha Investing Procedure
the stochastic process
$A(j)\equiv \alpha R(j)-V(j)+\alpha \eta - W(j)$ is a sub-martingale
with respect to $R_j$.\label{lem_martingale}\end{lemma}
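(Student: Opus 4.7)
My plan is to verify the sub-martingale property $E_\theta(A(j) \mid R_1,\ldots,R_{j-1}) \geq A(j-1)$ directly by computing the increment $A(j) - A(j-1)$ and splitting on whether $H_j$ is a true null or not. The two branches of the constraint~\ref{equ_gai_max_reward} imposed in the definition of a Generalized Alpha Investing Procedure are exactly what is needed to kill off the two cases, so the main work is bookkeeping rather than producing any new inequality.

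First I would compute the telescoped increment using $R(j)-R(j-1)=R_j$, $V(j)-V(j-1)=V_j$, and the update rule $W(j)-W(j-1)=-\ccj+R_j\ddj$ from~\ref{equ_gai_update}, obtaining
\[
A(j)-A(j-1) \;=\; \alpha R_j - V_j - R_j\ddj + \ccj.
\]
Conditioning on the history $R_1,\ldots,R_{j-1}$, the quantities $\ccj$, $\ddj$, and $\alpha_j$ are deterministic by the investing rule~\ref{equ_gai_rule}, so it suffices to show
\[
(\alpha-\ddj)\,E_\theta(R_j\mid R_1,\ldots,R_{j-1}) - E_\theta(V_j\mid R_1,\ldots,R_{j-1}) + \ccj \;\geq\; 0.
\]

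Next I would split into two cases. If $\theta\in H_j$, then $V_j=R_j$, so the expression becomes $(\alpha-1-\ddj)\,P_\theta(R_j=1\mid\cdot)+\ccj$. Since $\alpha-1-\ddj\leq 0$ and Assumption~\ref{ass_ind} gives $P_\theta(R_j=1\mid\cdot)\leq\alpha_j$, the expression is bounded below by $(\alpha-1-\ddj)\alpha_j+\ccj$, which is non-negative precisely by the right-hand branch $\ddj\leq \ccj/\alpha_j+\alpha-1$ of constraint~\ref{equ_gai_max_reward}. If $\theta\notin H_j$, then $V_j=0$, and the expression becomes $(\alpha-\ddj)P_\theta(R_j=1\mid\cdot)+\ccj$. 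If $\ddj\leq\alpha$ this is automatically non-negative; otherwise Assumption~\ref{ass_ind} gives $P_\theta(R_j=1\mid\cdot)\leq\ppj$, so the expression is bounded below by $(\alpha-\ddj)\ppj+\ccj$, which is non-negative by the left-hand branch $\ddj\leq \ccj/\ppj+\alpha$ of the same constraint.

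The main (mild) obstacle is being careful with the conditioning: the quantity $\alpha_j$ in Assumption~\ref{ass_ind} is itself the history-dependent level produced by $\JJ_{W(0)}$, but since the assumption is stated conditionally on $R_1,\ldots,R_{j-1}$, the argument goes through unchanged. Combining the two cases yields $E_\theta(A(j)\mid R_1,\ldots,R_{j-1})\geq A(j-1)$, which is exactly the sub-martingale property claimed.
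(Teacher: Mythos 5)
Your proposal is correct and follows essentially the same argument as the paper: telescope the increment to $(\alpha-\ddj)R_j - V_j + \ccj$, condition on the history so that $\ccj,\ddj,\alpha_j$ are constants, and close each of the two cases (true null via $\ddj\leq \ccj/\alpha_j+\alpha-1$ with $P\leq\alpha_j$, false null via $\ddj\leq \ccj/\ppj+\alpha$ with $P\leq\ppj$). The only cosmetic difference is your extra sub-case $\ddj\leq\alpha$ in the false-null branch, which the paper handles uniformly by first bounding $\alpha-\ddj\geq-\ccj/\ppj$ and then using $E_\theta(R_j\mid\cdot)\leq\ppj$.
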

\begin{proof}
We need to show
$E_\theta(A(j)|R_{j-1},R_{j-2},\ldots,R_1)\geq A(j-1)$.
Note that in analyzing this conditional expectation we will treat $\ccj$, $\ddj$,
and $\alpha_j$ as known constants, since given
$R_{j-1},R_{j-2},\ldots,R_1$ they are deterministically chosen
by rule~\ref{equ_gai_rule}.

We will use the abbreviation
$E_\theta^{j-1}(X)=E_\theta(X|R_{j-1},R_{j-2},\ldots,R_1)$.

Let us define $A_j=A(j)-A(j-1)$.
Since $R(j)-R(j-1)=R_j$, $V(j)-V(j-1)=V_j$ and
$W(j)-W(j-1)=-\ccj+R_j \ddj$,
we can rewrite $A_j$ more explicitly as
$A_j=(\alpha-\ddj)R_j - V_j +\ccj$.

Assuming a true null hypothesis, $\theta\in H_j$, we get that $R_j=V_j$. Hence
$A_j=-(1-\alpha+\ddj) R_j + \ccj$.
A series of inequalities achieves the desired result:
$E_\theta^{j-1}(A_j)  = \\
-(1-\alpha+\ddj) E_\theta^{j-1}(R_j)+\ccj \geq -(1-\alpha+\ddj) \alpha_j +\ccj
                \geq  -(1-\alpha+\frac{\ccj}{\alpha_j}+\alpha-1)\alpha_j+\ccj
               = 0$.\\
The first inequality follows from Assumption~\ref{ass_ind},
and because
$\ddj$ and $1-\alpha$ are nonnegative. The second inequality follows
because by equation~\ref{equ_gai_max_reward}
$\ddj \leq \frac{\ccj}{\alpha_j}+\alpha-1$,

Assuming a false null hypothesis, $V_j=0$, hence
$A_j=(\alpha-\ddj) R_j+\ccj$. Now the following series of inequalities will
achieve the desired result:
$E_\theta^{j-1}(A_j)  = (\alpha-\ddj) E_\theta^{j-1}(R_j)+\ccj
               \geq \\(\alpha-\frac{\ccj}{\ppj}-\alpha) E_\theta^{j-1}(R_j)+\ccj
               = -\frac{\ccj}{\ppj} E_\theta^{j-1}(R_j)+\ccj
               \geq  -\frac{\ccj}{\ppj} \ppj +\ccj
               =  0$.\\
The first inequality follows
from equation~\ref{equ_gai_max_reward} since
$\ddj \leq \frac{\ccj}{\ppj}+\alpha$, and the second inequality
is due to Assumption~\ref{ass_ind}.\qed
\end{proof}


\begin{theorem}
ERO Alpha Investing Optimality: Over a series of
tests where Assumption~\ref{regular_test} holds,
the \ero~Alpha Investing procedure (Definition~\ref{def_ero_alpha_investing}) is an \ero~procedure (Definition~\ref{def_expected_reward_optimal}).
\label{theorem_ero_full}
\end{theorem}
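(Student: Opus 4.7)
My plan is to reduce the uniform optimality claim to a one-dimensional problem in $\alpha_j$ by first maximizing out $\ddj$, then performing a two-region case analysis around the knee $\alpha_j^*$.

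First, since $E_\theta(R_j)\geq 0$, for any fixed $\alpha_j$ the objective $E_\theta(R_j)\ddj$ is maximized at the largest reward permitted by constraint~\ref{equ_gai_max_reward}, namely $\ddj(\alpha_j)=\min(B_1(\alpha_j),B_2(\alpha_j))$ with $B_1(\alpha_j)=\ccj/\alpha_j+\alpha-1$ and $B_2(\alpha_j)=\ccj/\ppj(\alpha_j)+\alpha$. Under the regularity and monotonicity of $\ppj$ encoded in Assumption~\ref{regular_test}, I would first verify that $B_1$ and $B_2$ cross exactly once, at the knee $\alpha_j^*$ characterized by $\ccj/\ppj=\ccj/\alpha_j-1$, so that the problem reduces to optimizing in the single variable $\alpha_j$.

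Next I would split the analysis at the knee. For $\alpha_j\leq\alpha_j^*$ the binding constraint is $B_2$, and the objective becomes $E_\theta(R_j)(\ccj/\ppj+\alpha)$. Using the defining inequality $E_\theta(R_j)\leq\ppj$ from Definition~\ref{def_best_power}, I would bound this by $\ccj+\alpha E_\theta(R_j)$, which is non-decreasing in $\alpha_j$ by the UMP-type monotonicity built into Assumption~\ref{regular_test}; hence the supremum on this region is attained at $\alpha_j=\alpha_j^*$, where both constraints bind so the bound is tight and equals the \ero{} value $E_\theta(R_j^*)\ddj^*$. For $\alpha_j\geq\alpha_j^*$ the binding constraint is $B_1$, and the objective is $E_\theta(R_j)(\ccj/\alpha_j+\alpha-1)$; I would show this is non-increasing in $\alpha_j$ by combining $E_\theta(R_j)\leq\ppj$ with a concavity-type slope bound on the ROC curve of the form $\alpha_j\ppj'(\alpha_j)\leq\ppj(\alpha_j)$ at and beyond the knee, which is the substantive shape content of Assumption~\ref{regular_test} and gets verified for UMP tests with a simple null in Lemma~\ref{lem_ump_satisfies}.

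The hardest step will be establishing the monotone decrease in the right-hand region uniformly in $\theta$. Under the null, $E_{\theta_0}(R_j)=\alpha_j$ collapses the product to $\ccj+\alpha_j(\alpha-1)$, which is trivially decreasing; the nontrivial content is monotone decrease for every $\theta$ in the alternative, which is exactly why Assumption~\ref{regular_test} must go beyond bare UMP and also constrain the shape of the power curve. Once both regions are settled, patching the two one-sided bounds at the knee $\alpha_j^*$ yields $E_\theta(R_j)\ddj\geq E_\theta(R_j')\ddj'$ for every feasible competitor $(\alpha_j',\ddj')$ and every $\theta\in\Theta$, which is precisely the \ero{} property of Definition~\ref{def_expected_reward_optimal}.
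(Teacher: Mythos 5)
Your overall strategy (maximize out $\ddj$, reduce to a one-variable problem in $\alpha_j$, and compare the two branches of the constraint at the knee) is the same skeleton as the paper's proof, but the two monotonicity arguments you propose -- which are the actual substance -- do not go through. On the region where $\ddj=\ccj/\ppj+\alpha$, you bound the objective $\gamma_j(\ccj/\ppj+\alpha)$ (writing $\gamma_j=E_\theta(R_j)$) by $\ccj+\alpha\gamma_j$ using $\gamma_j\le\ppj$, and then claim this bound is attained at the knee. It is not: the knee value is $\ccj\,\gamma_j(\alpha_j^\star)/\ppj(\alpha_j^\star)+\alpha\gamma_j(\alpha_j^\star)$, which is strictly below $\ccj+\alpha\gamma_j(\alpha_j^\star)$ whenever $\gamma_j(\alpha_j^\star)<\ppj(\alpha_j^\star)$ -- i.e., for every $\theta$ other than the best-power alternative, including the null where $\gamma_j=\alpha_j$. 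So your chain of inequalities ends at a number larger than the knee value and proves nothing about domination by the knee. The ERO definition quantifies over all $\theta$, so this case is not a corner case; it is the generic one. The ingredient you actually need is item (3) of Assumption~\ref{regular_test}, monotonicity of $\ppj/\gamma_j$, which makes the product $\gamma_j(\ccj/\ppj+\alpha)=\ccj\gamma_j/\ppj+\alpha\gamma_j$ itself non-decreasing in $\alpha_j$ (this is Lemma~\ref{lem_equ_opt} in the paper); replacing $\gamma_j/\ppj$ by $1$ throws away exactly the structure that makes the knee optimal.

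The other branch has the same defect in a different guise: you propose to show $\gamma_j(\ccj/\alpha_j+\alpha-1)$ is non-increasing by combining $\gamma_j\le\ppj$ with a slope bound $\alpha_j\ppj'\le\ppj$ on the \emph{best-power} curve. A bound on $\ppj$ and its derivative gives no control over $\gamma_j'$ for an arbitrary alternative $\theta$ (small $\gamma_j$ can still have a steep slope), so this cannot yield the required monotone decrease uniformly in $\theta$. What is needed is the shape condition on the actual power function at the true $\theta$, namely $\gamma_j/\alpha_j$ non-increasing (item (4) of Assumption~\ref{regular_test}), which the paper verifies for every $\theta$ via the Neyman--Pearson lemma in Lemmas~\ref{lem_pearson_satisfies} and~\ref{lem_ump_satisfies}; you correctly sense that the assumption "must constrain the shape of the power curve," but the mechanism you write down constrains the wrong curve. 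Two smaller points: the existence of a solution of $\ccj/\ppj=\ccj/\alpha_j-1$ is not automatic and needs the limit argument of Lemma~\ref{lem_equ_solve} ($f(1)=1$ and $f\to-\infty$ as $\alpha_j\to0$); and your claim that the two branches "cross exactly once" is both unnecessary (the paper only needs that every solution is optimal) and not implied by the stated non-strict monotonicity assumptions. Finally, note that once $\gamma_j(\ccj/\ppj+\alpha)$ is shown non-decreasing and $\gamma_j(\ccj/\alpha_j+\alpha-1)$ non-increasing, no identification of which constraint binds on which side is needed at all: the objective is the pointwise minimum of these two products, which are equal at $\alpha_j^\star$, so it is bounded by its value there on both sides.
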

\begin{proof}
Clearly, once $\alpha_j$ is set, the optimal choice for $\ddj$ is the
maximal one allowed by constraint~\ref{equ_gai_max_reward}. All that remains
to show is that $\frac{\ccj}{\ppj}=\frac{\ccj}{\alpha_j}-1$ has
solutions and that they are all indeed optimal.
Lemma~\ref{lem_equ_solve} shows the former, and Lemma~\ref{lem_equ_opt} the latter.\qed
\end{proof}

\begin{assumption}
Let $\ppj$ be the best-power (according to Definition~\ref{def_best_power}),
and $\gamma_j$ be the actual probability of rejection for a given $\theta$, $\gamma_j=P_\theta(R_j=1)$.
Then we assume the following:  (1) $\ppj$ and $\gamma_j$ as functions of $\alpha_j$ are continuous, monotonically non-decreasing, (2) $\frac{\ppj}{\alpha_j}$ is monotonically non-increasing, (3) $\frac{\ppj}{\gamma_j}$ is monotonically non-increasing, (4) $\frac{\gamma_j}{\alpha_j}$ is monotonically non-increasing.
\label{regular_test}
\end{assumption}

\begin{lemma}
Given Assumption~\ref{regular_test},
$\frac{\ccj}{\ppj}=\frac{\ccj}{\alpha_j}-1$
is solvable in the range $0<\alpha_j<1$.
\label{lem_equ_solve}
\end{lemma}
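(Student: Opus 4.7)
The plan is to reformulate the equation as the vanishing of the continuous function
\[
f(\alpha_j) \;=\; \frac{\ccj}{\alpha_j} - 1 - \frac{\ccj}{\ppj(\alpha_j)}
\]
on the open interval $(0,1)$, and then apply the Intermediate Value Theorem. Continuity of $f$ follows immediately from Assumption~\ref{regular_test}(1), once one notes that $\ppj(\alpha_j)\geq\alpha_j>0$ for any meaningful test (its best power is at least its size, since a trivial randomized rule already attains level $\alpha_j$), which keeps the denominator bounded away from zero throughout $(0,1)$.

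I would then analyze the two endpoints to locate a sign change. As $\alpha_j\to 1^-$, the best power satisfies $\ppj(\alpha_j)\to 1$ (a level-$1$ test always rejects), so $f(\alpha_j)\to \ccj - 1 - \ccj = -1 < 0$. For the other endpoint I would rewrite
\[
f(\alpha_j) \;=\; \frac{\ccj}{\alpha_j}\Bigl(1 - \frac{\alpha_j}{\ppj(\alpha_j)}\Bigr) - 1,
\]
and invoke Assumption~\ref{regular_test}(2): since $\ppj/\alpha_j$ is non-increasing, $\alpha_j/\ppj$ is non-decreasing and bounded above by $1$. Provided the test is non-degenerate, $\alpha_j/\ppj$ stays strictly below $1$ as $\alpha_j\to 0^+$, so $1-\alpha_j/\ppj$ is bounded below by a positive constant while $\ccj/\alpha_j\to\infty$, forcing $f(\alpha_j)\to +\infty$. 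IVT then yields a zero in $(0,1)$, which is exactly the solution required.

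The main obstacle is ruling out the pathological case $\ppj(\alpha_j)\equiv\alpha_j$ (a test with no power beyond its size), in which $f\equiv -1$ identically and no root exists. This corresponds to a useless test for which the \ero~optimization is moot and should be excluded implicitly in the statement; in any practical scenario there exists some $\alpha_j^\star\in(0,1)$ with $\ppj(\alpha_j^\star)>\alpha_j^\star$, and by the non-increasing monotonicity of $\ppj/\alpha_j$ this strict inequality then persists for all smaller $\alpha_j$, supplying the uniform lower bound on $1-\alpha_j/\ppj$ needed near $0$. Modulo this mild caveat, continuity together with the two-sided sign information closes the argument.
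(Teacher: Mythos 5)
Your proposal is correct and takes essentially the same route as the paper: an Intermediate Value Theorem argument on the same function (up to sign), with the known sign at $\alpha_j=1$ and the divergence to the opposite sign as $\alpha_j\to 0$ supplying the root. Your rewriting $f(\alpha_j)=\frac{\ccj}{\alpha_j}\bigl(1-\frac{\alpha_j}{\ppj}\bigr)-1$ merges into one step what the paper handles by splitting on whether $\lim_{\alpha_j\to 0}\ppj$ is positive or zero, and the degenerate case $\ppj\equiv\alpha_j$ you flag is exactly the possibility the paper excludes by noting that the null and alternative hypotheses are distinct.
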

\begin{proof}
Let us denote $f(\alpha_j)=\frac{\ccj}{\ppj}(1-\frac{\ppj}{\alpha_j})+1$.
It is enough to show $f(\alpha_j)=0$ for some $0<\alpha_j<1$.
By Assumption~\ref{regular_test}, we know $f(\alpha_j)$ is a continuous function over the range $(0,1]$, and clearly
$f(1)=1$. Thus, we can conclude the proof by showing $lim_{\alpha_j \to 0} f(\alpha_j)=-\infty$.
Since $\ppj$ is non-negative and non-decreasing, the limit $lim_{\alpha_j \to 0} \ppj$ exists. Let us denote it by $\delta$. If $\delta>0$, then $f(\alpha_j)<\frac{\ccj}{\ppj}(1-\frac{\delta}{\alpha_j})+1$, which implies
$lim_{\alpha_j \to 0} f(\alpha_j)=-\infty$.
If $\delta=0$, then by Assumption~\ref{regular_test}, $\frac{\ppj}{\alpha_j}$ is monotonically non-increasing with $\alpha_j$. Since $\frac{\ppj}{\alpha_j}\geq 1$, and since the alternative and null hypotheses must be distinct, we
must have for some $0<\alpha_j^\star<1$ and positive $\epsilon$ that $\forall \alpha_j<\alpha_j^\star: \frac{\ppj}{\alpha_j}\geq 1+\epsilon$. Therefore $f(\alpha_j)<-\epsilon \frac{\ccj}{\ppj}+1$. Since we assumed $\delta=0$, we have proved $lim_{\alpha_j \to 0} f(\alpha_j)=-\infty$.\qed
\end{proof}

\begin{lemma}
Given Assumption~\ref{regular_test}, then for
any choice of $\ccj$, and for any $\theta$, the expression
$E_\theta(R_j) min(\frac{\ccj}{\ppj}+\alpha,\frac{\ccj}{\alpha_j}+\alpha-1)$
is maximized by setting $\alpha_j$ to be the solution of
$\frac{\ccj}{\ppj}=\frac{\ccj}{\alpha_j}-1$.
\label{lem_equ_opt}
\end{lemma}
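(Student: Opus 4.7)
The plan is to split the optimization at the ``knee'' $\alpha^\star$, which by Lemma~\ref{lem_equ_solve} exists as a solution of $\frac{\ccj}{\ppj}=\frac{\ccj}{\alpha_j}-1$, and to show that the objective is non-decreasing to the left of $\alpha^\star$ and non-increasing to its right. Step one is to identify which branch of the min is active on each side. Defining $h(\alpha_j)=\ccj\bigl(\frac{1}{\alpha_j}-\frac{1}{\ppj}\bigr)=\frac{\ccj}{\alpha_j}\bigl(1-\frac{\alpha_j}{\ppj}\bigr)$, the knee condition is exactly $h(\alpha^\star)=1$. By Assumption~\ref{regular_test} part~(2), $\ppj/\alpha_j$ is non-increasing, hence $1-\alpha_j/\ppj$ is non-increasing; combined with the obvious fact that $\ccj/\alpha_j$ is non-increasing, $h$ is a product of two nonnegative non-increasing functions, so $h$ itself is non-increasing. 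Because $\bigl(\frac{\ccj}{\alpha_j}+\alpha-1\bigr)-\bigl(\frac{\ccj}{\ppj}+\alpha\bigr)=h(\alpha_j)-1$, it follows that the min equals $\frac{\ccj}{\ppj}+\alpha$ on $(0,\alpha^\star]$ and equals $\frac{\ccj}{\alpha_j}+\alpha-1$ on $[\alpha^\star,1)$.

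Step two is to check the monotonicity of the objective on each piece separately, using whichever part of Assumption~\ref{regular_test} is appropriate. On the left piece the objective is
\[
\gamma_j\Bigl(\tfrac{\ccj}{\ppj}+\alpha\Bigr)\;=\;\ccj\,\tfrac{\gamma_j}{\ppj}+\alpha\gamma_j.
\]
By part~(3), $\ppj/\gamma_j$ is non-increasing, so $\gamma_j/\ppj$ is non-decreasing; by part~(1), $\gamma_j$ is non-decreasing. Thus both summands are non-decreasing in $\alpha_j$, and the left piece is maximized at its right endpoint $\alpha^\star$. On the right piece the objective is
\[
\gamma_j\Bigl(\tfrac{\ccj}{\alpha_j}+\alpha-1\Bigr)\;=\;\ccj\,\tfrac{\gamma_j}{\alpha_j}-(1-\alpha)\gamma_j.
\]
By part~(4), $\gamma_j/\alpha_j$ is non-increasing, so the first summand is non-increasing; since $\gamma_j$ is non-decreasing and $1-\alpha>0$, the term $-(1-\alpha)\gamma_j$ is also non-increasing. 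Hence the right piece is maximized at its left endpoint, again $\alpha^\star$. The two pieces agree at $\alpha^\star$ by construction of the knee, so $\alpha^\star$ is a global maximizer over $(0,1)$.

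The only genuinely delicate step is the sign analysis that identifies the active branch on either side of $\alpha^\star$, and this is exactly where Assumption~\ref{regular_test}~(2) is used; the rest is a one-line application of each of the other three parts of the assumption, with part~(3) handling the ``level'' branch and part~(4) handling the ``power'' branch. The argument yields only weak monotonicity, so the maximizer need not be unique, but $\alpha^\star$ is always a maximizer, which is exactly the claim of the lemma.
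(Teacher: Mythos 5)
Your proof is correct and its core is exactly the paper's argument: write the objective as $\gamma_j=E_\theta(R_j)$ times the min, show $\gamma_j\bigl(\tfrac{\ccj}{\ppj}+\alpha\bigr)=\ccj\tfrac{\gamma_j}{\ppj}+\alpha\gamma_j$ is non-decreasing via parts (1) and (3) of Assumption~\ref{regular_test} and $\gamma_j\bigl(\tfrac{\ccj}{\alpha_j}+\alpha-1\bigr)=\ccj\tfrac{\gamma_j}{\alpha_j}-(1-\alpha)\gamma_j$ is non-increasing via parts (1) and (4), and use that the two branches coincide at the knee $\alpha_j^\star$ from Lemma~\ref{lem_equ_solve}. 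Your preliminary branch-identification step is harmless but not needed --- since $\min(g_1,g_2)\le g_1\le g_1(\alpha_j^\star)$ to the left of the knee and $\min(g_1,g_2)\le g_2\le g_2(\alpha_j^\star)$ to the right, one never has to know which branch is active --- and it quietly relies on $\ppj\ge\alpha_j$ (to make $1-\alpha_j/\ppj$ nonnegative), a property not listed in Assumption~\ref{regular_test}, though the paper itself invokes it in Lemma~\ref{lem_equ_solve}, so this is on par with the paper's own level of rigor.
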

\begin{proof}
Lemma~\ref{lem_equ_solve} proves $\frac{\ccj}{\ppj}=\frac{\ccj}{\alpha_j}-1$ is solvable for some $0<\alpha_j^\star<1$.
We will now show that $\alpha_j^\star$ maximizes
$E_\theta(R_j) min(\frac{\ccj}{\ppj}+\alpha,\frac{\ccj}{\alpha_j}+\alpha-1)$.
Using the notation of Assumption~\ref{regular_test}, we can write $\gamma_j=E_\theta(R_j)$. Note that $\alpha_j^\star$ brings the two arguments of the $min$ operator to equality, therefore it suffices to show that
$\gamma_j(\frac{\ccj}{\ppj}+\alpha)$ is monotonically non-decreasing with $\alpha_j$ and that
$\gamma_j(\frac{\ccj}{\alpha_j}+\alpha-1)$ is monotonically non-increasing.
$\gamma_j(\frac{\ccj}{\ppj}+\alpha)$ can be rewritten as
$\ccj\frac{\gamma_j}{\ppj}+\gamma_j\alpha$. This is monotonically non-decreasing, since $\ccj,\alpha$ are constants,
and by Assumption~\ref{regular_test},
$\gamma_j$ is non-decreasing,
and $\frac{\gamma_j}{\ppj}$ is non-decreasing.
$\gamma_j(\frac{\ccj}{\alpha_j}+\alpha-1)$ can be rewritten as
$\ccj\frac{\gamma_j}{\alpha_j}-(1-\alpha)\gamma_j$. This is monotonically non-increasing, since
$\ccj,\alpha$ are constants, and by Assumption~\ref{regular_test},
$\gamma_j$ is non-decreasing, $\frac{\gamma_j}{\alpha_j}$ is non-increasing.\qed
\end{proof}

\begin{lemma}
Assumption~\ref{regular_test} holds for Neyman-Pearson tests with continuous distribution functions.
\label{lem_pearson_satisfies}
\end{lemma}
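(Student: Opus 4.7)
The plan is to reduce all four conditions in Assumption~\ref{regular_test} to a single fact: concavity of the Neyman--Pearson ROC curve. Let $f_0, f_1$ be the densities under the simple null $\theta_0$ and the simple alternative $\theta_1$, and let $\Lambda = f_1(X)/f_0(X)$ denote the likelihood ratio. Because the distributions are continuous, for every $\alpha_j \in [0,1]$ there exists a (non-randomized) threshold $k_{\alpha_j}$ with $P_{\theta_0}(\Lambda > k_{\alpha_j}) = \alpha_j$, and the most-powerful test of level $\alpha_j$ rejects exactly when $\Lambda > k_{\alpha_j}$; its power is $\ppj(\alpha_j) = P_{\theta_1}(\Lambda > k_{\alpha_j})$.

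First I would establish that $\alpha_j \mapsto \ppj(\alpha_j)$ is concave on $[0,1]$ with $\ppj(0)=0$ and $\ppj(1)=1$. The standard randomization argument does this: given $0 \le \alpha_1 < \alpha_2 \le 1$ and $\lambda \in (0,1)$, the mixture test that applies the NP test of level $\alpha_i$ with probability $\lambda$ or $1-\lambda$ has exact level $\lambda\alpha_1+(1-\lambda)\alpha_2$ and power $\lambda\ppj(\alpha_1)+(1-\lambda)\ppj(\alpha_2)$, so by the Neyman--Pearson lemma $\ppj(\lambda\alpha_1+(1-\lambda)\alpha_2) \ge \lambda\ppj(\alpha_1)+(1-\lambda)\ppj(\alpha_2)$. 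Monotonicity of $\ppj$ in $\alpha_j$ follows from concavity together with $\ppj(0)=0$ and $\ppj \le 1$; continuity on $(0,1)$ is automatic for a concave function, and continuity up to the endpoints follows from continuity of the CDF of $\Lambda$ under both hypotheses.

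Next I would handle $\gamma_j = P_\theta(R_j=1)$. Since the parameter space is $\{\theta_0,\theta_1\}$, either $\gamma_j = \alpha_j$ (when $\theta=\theta_0$) or $\gamma_j = \ppj$ (when $\theta=\theta_1$). In the first case $\gamma_j/\alpha_j \equiv 1$ and $\ppj/\gamma_j = \ppj/\alpha_j$; in the second case $\ppj/\gamma_j \equiv 1$ and $\gamma_j/\alpha_j = \ppj/\alpha_j$. Thus conditions (3) and (4) reduce to condition (2), and condition (1) for $\gamma_j$ inherits continuity and monotonicity from the corresponding properties of $\alpha_j$ or $\ppj$.

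The one remaining non-trivial step is condition (2), i.e.\ $\ppj(\alpha_j)/\alpha_j$ is non-increasing. This is the classical consequence of concavity plus $\ppj(0)=0$: for $0 < \alpha' < \alpha \le 1$, writing $\alpha' = (\alpha'/\alpha)\,\alpha + (1-\alpha'/\alpha)\cdot 0$, concavity yields
\[
\ppj(\alpha') \;\ge\; \frac{\alpha'}{\alpha}\,\ppj(\alpha) + \left(1-\frac{\alpha'}{\alpha}\right)\ppj(0) \;=\; \frac{\alpha'}{\alpha}\,\ppj(\alpha),
\]
whence $\ppj(\alpha')/\alpha' \ge \ppj(\alpha)/\alpha$. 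The main potential obstacle is the delicate point of getting an exact non-randomized level and a genuinely continuous ROC from the continuous-distribution hypothesis, but once that is in place the whole lemma is a direct corollary of the concavity of the Neyman--Pearson ROC curve.
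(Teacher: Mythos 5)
Your proof is correct, and its overall skeleton matches the paper's: continuity of $\ppj$ and $\gamma_j$ from the continuous distribution functions, the observation that with a simple null and simple alternative $\gamma_j$ coincides with either $\alpha_j$ or $\ppj$ so that conditions (3) and (4) collapse to condition (2), and then the single nontrivial fact that $\ppj/\alpha_j$ is non-increasing. Where you differ is in how that central fact is established. The paper isolates it as Lemma~\ref{lem_pearson} and disposes of it in one line by appealing to the likelihood-ratio (threshold) structure of the Neyman--Pearson test, whereas you derive it from concavity of the power function in $\alpha_j$, proved via the randomization/mixture argument, followed by the chord-through-the-origin inequality. Your route is more self-contained and arguably more robust: the mixture argument needs no exact non-randomized threshold (the NP lemma's optimality already covers randomized competitors), so the ``delicate point'' you flag is not actually an obstacle to concavity; and the chord inequality only uses $\ppj(0)\geq 0$, so it survives the case, possible even with continuous data, where the level-zero test has positive power because the alternative's support exceeds the null's. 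Two small touch-ups: monotonicity of $\ppj$ in $\alpha_j$ is most cleanly justified directly (a level-$\alpha'$ test is also a level-$\alpha$ test for $\alpha>\alpha'$), since concavity plus $\ppj(0)=0$ and $\ppj\leq 1$ alone do not force monotonicity without also using $\ppj(1)=1$ or $\ppj\geq\alpha_j$; and your continuity claims at the endpoints are stated at the same level of informality as the paper's own assertion, which is acceptable here. What the paper's version buys is brevity; what yours buys is an explicit, checkable argument in place of the paper's terse appeal to likelihood-ratio monotonicity.
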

\begin{proof}
The continuity of $\ppj$ and $\gamma_j$ is implied by the underlying continuous distribution functions.
Clearly $\ppj$ is a non-decreasing function of $\alpha_j$.
$\frac{\ppj}{\alpha_j}$ is monotonically non-increasing by Lemma~\ref{lem_pearson}.
In Neyman-Pearson tests, $\gamma_j$ is either identical to $\ppj$ or to $\alpha_j$, hence
$\frac{\ppj}{\gamma_j}$ is either the identity function or $\frac{\ppj}{\alpha_j}$. In either case it is monotonically non-increasing.
A similar argument holds for $\frac{\gamma_j}{\alpha_j}$ and for showing $\gamma_j$ is non-decreasing.\qed
\end{proof}

\begin{lemma}
For a uniformly most powerful test with a continuous distribution function, a single parameter $\theta$, and a simple null hypothesis $H_0: \theta=\theta_0$, Assumption~\ref{regular_test} holds.
\label{lem_ump_satisfies}
\end{lemma}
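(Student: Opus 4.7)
The plan is to reduce this to the Neyman--Pearson case already handled by Lemma~\ref{lem_pearson_satisfies}. The key observation is that because the test is UMP for the simple null $H_0: \theta=\theta_0$, the rejection region at any given level $\alpha_j$ must simultaneously coincide with the Neyman--Pearson optimal region for testing $\theta_0$ versus each individual $\theta_1 \in \Theta_j - H_j$. Fixing such a $\theta_1$ and viewing the power as a function of the level, the resulting curve $\pi_{\theta_1}(\alpha_j) := P_{\theta_1}(R_j=1)$ is precisely a Neyman--Pearson power function, and therefore inherits the full list of regularity properties that Lemma~\ref{lem_pearson_satisfies} has already established: continuity, monotone non-decrease in $\alpha_j$, and non-increase of $\pi_{\theta_1}(\alpha_j)/\alpha_j$.

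First I would dispose of the parts of Assumption~\ref{regular_test} that involve $\gamma_j$ alone. By definition $\gamma_j = \pi_\theta(\alpha_j)$ for the true parameter $\theta$, so applying the NP case to the single choice $\theta_1 = \theta$ immediately yields that $\gamma_j$ is continuous and non-decreasing in $\alpha_j$ and that $\gamma_j/\alpha_j$ is non-increasing, taking care of half of condition (1) and all of condition (4).

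Next I would turn to $\rho_j$ by writing it as the pointwise supremum $\rho_j(\alpha_j) = \sup_{\theta_1 \in \Theta_j - H_j} \pi_{\theta_1}(\alpha_j)$. The sup of non-decreasing functions is non-decreasing, and the sup of the non-increasing family $\pi_{\theta_1}(\alpha_j)/\alpha_j$ is non-increasing, so the remaining part of condition (1) for $\rho_j$ and all of condition (2) follow directly. For continuity of $\rho_j$ in $\alpha_j$ I would invoke the monotone likelihood ratio structure implicit in any one-parameter UMP family: $\pi_{\theta_1}(\alpha_j)$ is monotone in $\theta_1$ for fixed $\alpha_j$, so the supremum is either attained or approached as a limit along the upper boundary of the alternative space; combined with CDF-continuity of the test statistic, the limit preserves continuity in $\alpha_j$.

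The main obstacle is condition (3), monotonicity of $\rho_j/\gamma_j$ in $\alpha_j$. The approach is to rewrite it as $\sup_{\theta_1} \pi_{\theta_1}(\alpha_j)/\pi_\theta(\alpha_j) = \sup_{\theta_1} P_{\theta_1}(T > c_{\alpha_j})/P_\theta(T > c_{\alpha_j})$, where $T$ is the UMP test statistic and $c_{\alpha_j}$ its critical value, which is non-increasing in $\alpha_j$ by CDF-continuity. The essential technical step is a classical tail-ratio argument: under MLR of the family in $T$, the map $c \mapsto P_{\theta_1}(T > c)/P_\theta(T > c)$ is non-decreasing, since enlarging the tail (lowering $c$) averages in terms with smaller likelihood ratio. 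Hence each individual ratio is non-increasing in $\alpha_j$, and the supremum of non-increasing functions is again non-increasing. Justifying this MLR-to-tail-ratio implication cleanly, and confirming that the UMP assumption legitimately supplies MLR in the relevant direction, is where the real work lies; everything else is bookkeeping layered on top of Lemma~\ref{lem_pearson_satisfies}.
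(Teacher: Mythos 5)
Your proposal is correct and rests on the same key reduction as the paper: a UMP test of the simple null coincides, via the Neyman--Pearson lemma, with the NP test against each simple alternative, so its power at any alternative value inherits the properties already established in Lemma~\ref{lem_pearson_satisfies}. The organization differs, though. The paper argues by cases: if the alternative is unbounded then $\rho_j\equiv 1$ and the $\rho_j$-properties are trivial; if bounded, the best power is taken to be attained at an extreme point $\theta_1$ and Lemma~\ref{lem_pearson_satisfies} is cited; for $\gamma_j$ it splits on whether $H_0$ is true ($\gamma_j\equiv\alpha_j$) or not (NP power at the true $\theta$). You instead write $\rho_j(\alpha_j)=\sup_{\theta_1}\pi_{\theta_1}(\alpha_j)$ and pass monotonicity of $\pi_{\theta_1}$ and of $\pi_{\theta_1}/\alpha_j$ through the supremum, which avoids both the bounded/unbounded split and the implicit attainment assumption. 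More importantly, you treat condition (3) of Assumption~\ref{regular_test} head-on: when the true $\theta$ lies in the alternative but differs from the extreme point, $\rho_j/\gamma_j$ is a ratio of powers at two distinct alternative values, a situation Lemma~\ref{lem_pearson_satisfies} does not cover (there $\gamma_j$ is either $\rho_j$ or $\alpha_j$); the paper dismisses this case by saying a similar argument applies, whereas your MLR tail-ratio argument ($c\mapsto P_{\theta_1}(T>c)/P_\theta(T>c)$ non-decreasing in $c$, hence the power ratio non-increasing in $\alpha_j$, and a supremum of non-increasing functions stays non-increasing) is the standard fact that actually fills this in. Your closing caveat is fair: likelihood-ratio monotonicity between two alternative points is not a formal consequence of UMP-ness alone, so this step tacitly places you in the Karlin--Rubin/MLR setting; but the paper's one-line treatment carries the same unstated assumption, so this is a matter of explicitness rather than a gap relative to the paper's own standard of rigor.
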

\begin{proof}
The continuity of $\ppj$ and $\gamma_j$ is implied by the underlying continuous distribution functions.
If the alternative hypothesis is unbounded, then $\ppj\equiv 1$, for which all the required properties involving $\ppj$ follow trivially. If the alternative hypothesis is bounded, then the best power $\ppj$ is calculated at some extreme point $\theta_1$. According to the Neyman-Pearson lemma, the power is identical to the power of a test with a simple alternative $H_1:\theta=\theta_1$, hence Lemma~\ref{lem_pearson_satisfies} proves all the required properties involving $\ppj$.
Regarding $\gamma_j$, it is computed at the actual $\theta$. If $H_0$ happens to be true, then $\gamma_j\equiv \alpha_j$, and we trivially get all the properties of $\gamma_j$. Otherwise, a similar argument using the Neyman-Pearson lemma applies.\qed
\end{proof}

\begin{lemma}
In Neyman-Pearson tests with continuous distribution functions with level $\alpha$ and power $\rho$,
$\frac{\rho}{\alpha}$ is a
monotonically non-increasing function of $\alpha$.
\label{lem_pearson}
\end{lemma}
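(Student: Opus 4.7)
The plan is to recognize that for Neyman--Pearson testing with simple null $H_0$ and simple alternative $H_1$, the function $\rho(\alpha)$ is precisely the receiver operating characteristic (ROC) curve of the likelihood-ratio test, and that the claim reduces to two classical observations: (i) the ROC curve is concave, and (ii) for any concave function on $[0,1]$ with $\rho(0)=0$, the chord slope $\rho(\alpha)/\alpha$ is a non-increasing function of $\alpha$. Continuity of the distribution functions removes the need for randomized tests, so $\rho(\alpha)$ is well defined and continuous for every $\alpha \in [0,1]$.

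First I would recall that by the Neyman--Pearson lemma, the most powerful level-$\alpha$ test rejects when $f_1(x)/f_0(x) > k_\alpha$, with $k_\alpha$ chosen so that the type-I error equals $\alpha$. Then I would establish concavity of $\rho(\alpha)$ by a standard randomization argument: given any two levels $\alpha_1,\alpha_2$ and any $\lambda\in[0,1]$, a test that flips a $\lambda$-coin and runs the optimal tests at levels $\alpha_1$ and $\alpha_2$ accordingly has size $\lambda\alpha_1+(1-\lambda)\alpha_2$ and power $\lambda\rho(\alpha_1)+(1-\lambda)\rho(\alpha_2)$. Since $\rho$ evaluated at the mixed level is by definition the \emph{maximum} power attainable at that level, we obtain
\[
\rho\bigl(\lambda\alpha_1+(1-\lambda)\alpha_2\bigr)\;\geq\;\lambda\rho(\alpha_1)+(1-\lambda)\rho(\alpha_2),
\]
i.e. $\rho$ is concave on $[0,1]$. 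Together with $\rho(0)=0$ (an everywhere-accept test has no power), this is the input needed for the next step.

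Second, I would deduce the monotonicity of $\rho(\alpha)/\alpha$ from concavity plus $\rho(0)=0$. For any $0<\alpha_1<\alpha_2\leq 1$, writing $\alpha_1=\tfrac{\alpha_1}{\alpha_2}\cdot\alpha_2+(1-\tfrac{\alpha_1}{\alpha_2})\cdot 0$ and applying concavity yields
\[
\rho(\alpha_1)\;\geq\;\tfrac{\alpha_1}{\alpha_2}\rho(\alpha_2)+(1-\tfrac{\alpha_1}{\alpha_2})\rho(0)\;=\;\tfrac{\alpha_1}{\alpha_2}\rho(\alpha_2),
\]
so $\rho(\alpha_1)/\alpha_1\geq \rho(\alpha_2)/\alpha_2$, which is exactly the claimed monotonicity.

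The only delicate point I anticipate is making sure the randomization argument is legitimate in the current setting: one must be careful that even though continuity of the distribution functions makes unrandomized Neyman--Pearson tests achieve every level $\alpha\in[0,1]$, the comparison step in the concavity argument implicitly allows randomized competitors. This is not a real obstacle---randomized tests always exist and the Neyman--Pearson lemma still bounds their power---but it is the one place where the proof must be worded carefully. Everything else is a one-line application of concavity to the chord through the origin.
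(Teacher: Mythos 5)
Your proof is correct, but it takes a different route from the paper. The paper's argument is a one-line appeal to the monotone likelihood ratio structure of Neyman--Pearson rejection regions: the level-$\alpha$ regions are nested upper level sets $\{f_1/f_0>k_\alpha\}$, and since $\rho(\alpha)/\alpha$ is the $H_0$-conditional average of the likelihood ratio over the rejection region, enlarging the region by lowering the threshold only adds points of smaller likelihood ratio, so the ratio cannot increase. You instead prove concavity of the power function $\rho(\alpha)$ (the ROC curve) via the randomization/mixing argument and then read off the monotonicity of the chord slope through the origin; this is a classical and fully valid alternative, and it is more self-contained in that it never manipulates the likelihood ratio directly, only the optimality asserted by the Neyman--Pearson lemma (which, as you correctly flag, must be invoked for randomized competitors even though continuity makes the optimal tests themselves non-randomized). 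Two small remarks: your assertion $\rho(0)=0$ identifies the maximal power at level $0$ with the power of the always-accept test, which can fail when the two distributions have, e.g., disjoint supports; fortunately your chord inequality only needs $\rho(0)\geq 0$, so nothing breaks. Likewise, continuity of $\rho$ is not actually used anywhere in your argument. The trade-off is clear: the paper's proof buys brevity at the cost of leaving the "average likelihood ratio over a shrinking level set" step implicit, while yours spells out a complete elementary argument at the cost of a short detour through randomized tests and ROC concavity.
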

\begin{proof}
By monotonicity of likelihood ratio implied by the Neyman-Pearson
lemma, this is guaranteed.\qed
\end{proof}


\begin{table}
  \caption{\label{table_constant2}Simulation results for the 'constant'
    allocation scheme.}
  \centering
\fbox{%
\begin{tabular}{p{6cm} | p{1cm} | p{1cm} | p{1cm} | p{1cm}}
\em Procedure       & \em Tests        & \em True rejects & \em False rejects & \em mFDR \\
\hline
Alpha Spending &       10.000 &        0.000 &        0.046 &        0.046 \\
Alpha Investing &       10.563 &        0.000 &        0.047 &        0.047 \\
Alpha Spending with Rewards (k=1) &       10.513 &        0.000 &        0.046 &        0.047 \\
Alpha Spending with Rewards (k=1.1) &       11.543 &        0.000 &        0.048 &        0.048 \\
\ero~Alpha Investing &       17.152 &        0.000 &        0.018 &        0.018 \\
\end{tabular}}
\end{table}

\begin{table}
  \caption{\label{table_relative2}Simulation results for the 'relative' allocation scheme.}
  \centering
\fbox{%
\begin{tabular}{p{6cm} | p{1cm} | p{1cm} | p{1cm} | p{1cm} | p{1cm}}
\em Procedure       & \em Tests        & \em True rejects & \em False rejects & \em mFDR \\
\hline
Alpha Spending &       66.000 &        0.001 &        0.048 &        0.048 \\
Alpha Investing &       66.678 &        0.001 &        0.047 &        0.047 \\
Alpha Spending with Rewards (k=1) &       66.665 &        0.001 &        0.047 &        0.047 \\
Alpha Spending with Rewards (k=1.1) &       67.019 &        0.001 &        0.049 &        0.049 \\
\ero~Alpha Investing &       66.925 &        0.001 &        0.056 &        0.056 \\
\end{tabular}}
\end{table}

\begin{table}
\caption{\label{table_rel2002}Simulation results for the 'relative-200' allocation scheme.}
  \centering
\fbox{%
\begin{tabular}{p{6cm} | p{1cm} | p{1cm} | p{1cm} | p{1cm} | p{1cm}}
\em Procedure & \em Tests & \em True rejects & \em False rejects & \em mFDR \\
\hline
Alpha Spending &      200 &        0.000 &        0.051 &        0.051 \\
Alpha Investing &      200 &        0.000 &        0.052 &        0.052 \\
Alpha Spending with Rewards (k=1) &      200 &        0.000 &        0.052 &        0.052 \\
Alpha Spending with Rewards (k=1.1) &      200 &        0.001 &        0.054 &        0.053 \\
\ero~Alpha Investing &      200 &        0.001 &        0.055 &        0.055 \\
\end{tabular}}
\end{table}

\begin{lemma}
In $\QPDASR(\alpha,\eta,q)$ it holds that $W(j)\geq \alpha\eta q^{n_j-n_0}$.
\label{lem_qpdr_superior}
\end{lemma}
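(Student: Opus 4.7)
The plan is a straightforward induction on $j$, using the fact that the QPD-ASR update rule combined with the level allocation formula~\ref{equ_persistent_level} produces a multiplicative decay by exactly $q^{c_j}$ per step, plus a non-negative reward term.

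For the base case $j=0$, the claim is immediate: by the definition of $\QPDASR(\alpha,\eta,q)$ (using $k=1$ as stated in Subsection~\ref{subsec_qpdalpha}), $W(0)=\alpha\eta$, and $q^{n_0-n_0}=1$, so $W(0)=\alpha\eta q^{n_0-n_0}$. For the inductive step, I would assume $W(j-1)\geq \alpha\eta q^{n_{j-1}-n_0}$ and compute $W(j)$ by substituting the level allocation $\alpha_j=W(j-1)(1-q^{c_j})$ into the Alpha Spending with Rewards update, obtaining $W(j)=W(j-1)q^{c_j}+R_j\ddj$. Since the reward term is non-negative (by $R_j\in\{0,1\}$ and $\ddj\geq 0$ from the Alpha Spending with Rewards constraints), this is bounded below by $W(j-1)q^{c_j}$. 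Applying the inductive hypothesis and the identity $n_j=n_{j-1}+c_j$ then yields $W(j)\geq \alpha\eta q^{n_{j-1}-n_0}\cdot q^{c_j}=\alpha\eta q^{n_j-n_0}$, completing the induction.

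There is no real obstacle here; the lemma is essentially a bookkeeping statement. The only point that deserves explicit verification is that the reward $\ddj$ is non-negative, which holds because the Alpha Spending with Rewards constraint in Definition~\ref{def_asr} is imposed with $\ddj\geq 0$ implicit, and that the $A(j)$-component of $\QPDASR$ behaves exactly like the $\alpha$-wealth of $\QPDAS(\alpha\eta,q)$ so the inductive geometric decay argument from~\cite{qpd2010} carries over verbatim. The rewards then only improve the bound, which is precisely what the lemma asserts.
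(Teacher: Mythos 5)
Your proof is correct and follows essentially the same route as the paper's: induction on $j$, substituting the level allocation $\alpha_j=W(j-1)(1-q^{c_j})$ from equation~\ref{equ_persistent_level} into the update rule, dropping the non-negative reward term, and applying the inductive hypothesis with $n_j=n_{j-1}+c_j$. The only cosmetic difference is that the paper writes the reward explicitly as $\alpha R_j$ (the maximal reward for $k=1$) rather than $R_j\ddj$, which is immaterial since both arguments use only its non-negativity.
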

\begin{proof}
The proof is by induction.
According to equations~\ref{equ_asr_update},
$W(0)=\alpha\eta$ (Recall that we use $k=1$ for simplicity of
notation).
The following series of inequalities conclude the proof using
the induction hypothesis,
and equations~\ref{equ_asr_update},~\ref{equ_persistent_level}.
$
W(j)  = W(j-1)-\alpha_j + \alpha R_j \geq  \\
W(j-1)-\alpha_j =  W(j-1)-W(j-1)(1-q^{c_j}) =
W(j-1) q^{c_j}
 \geq  \alpha \eta q^{n_{j-1}-n_0} q^{c_j}
 =  \alpha\eta q^{n_j-n_0}.
$

\end{proof}

\subsection{Generalized Alpha Investing: Additional Simulation Results}
Tables~\ref{table_constant2},~\ref{table_relative2}, and~\ref{table_rel2002} depict the
simulation results with a frequency of $1\%$ false null hypotheses, and effect size of $0.1$.
Regarding number of tests, \ero~Alpha Investing significantly out-performed all competitors in both 'constant' and 'relative' allocation schemes (paired t-test p-values at most 0.0056), except when compared with the Alpha Spending with Rewards with k=1.1, in the 'relative' allocation scheme, where the difference was not significant (p-value 0.28).
Regarding number of true-rejections, all procedures are performing poorly in these extreme conditions. Hence no significant differences could be found here.
Due to the extremely low amount of rejections, the estimation of $mFDR$ is inaccurate, hence in some cases it seems above $0.05$.

\section{Acknowledgments}
We would like to thank
Yoav Benjamini, Marina Bogomolov, Ruth Heller, Daniel Yekutieli,
Hani Neuvirth, and the reviewing team
for their thoughtful and useful comments.
This work was supported by an Open Collaborative Research grant from
IBM, and by Israeli Science Foundation grant ISF-1227/09.

\bibliographystyle{plain}
\bibliography{paper}
\end{document}